\documentclass[final,twoside,11pt]{entics}

\usepackage{enticsmacro}
\usepackage{graphicx}
\usepackage[all]{xy}

\usepackage{subcaption} %

\usepackage{amsmath} 
\usepackage{array} 
\usepackage{textgreek} 
\usepackage{xspace} 
\usepackage{stmaryrd} 

\usepackage[crefname,capitalize]{lib/mymathpartir} 
\usepackage[bracketed=,macros]{lib/mymathtools}
\usepackage[bracketed=,macros]{lib/mycompounds}

\usepackage[capitalize,noabbrev,nameinlink]{cleveref} 
\usepackage{glossaries-extra} 

\usepackage{fontawesome} 

\input{lib/macros}

\DeclareSymbolFont{sfoperators}{T1}{\sfdefault}{\mddefault}{\itdefault}

\makeatletter
\renewcommand{\operator@font}{\mathgroup\symsfoperators}
\makeatother

\renewcommand{\termfont}{\mathsf}

\crefname{thm}{Theorem}{Theorems}
\Crefname{thm}{Theorem}{Theorems}
\crefname{prop}{Proposition}{Propositions}
\Crefname{prop}{Proposition}{Propositions}
\crefname{def}{Definition}{Definitions}
\Crefname{def}{Definition}{Definitions}
\crefname{cor}{Corollary}{Corollaries}
\Crefname{cor}{Corollary}{Corollaries}
\crefname{rule}{Rule}{Rules}
\Crefname{rule}{Rule}{Rules}
\crefname{lem}{Lemma}{Lemmas}
\Crefname{lem}{Lemma}{Lemmas}

\def\conf{MFPS 2026} 	
\volume{NN}			
\def\lastname{Valliappan} 
\begin{document}
\begin{frontmatter}
  \title{Cover Semantics for Intuitionistic Modalities} 
  \author{Nachiappan Valliappan\thanksref{myemail}}	
   \address{School of Informatics\\ University of Edinburgh\\				
    Edinburgh, Scotland}
   \thanks[myemail]{Email: \href{mailto:nachivpn@gmail.com} {\texttt{\normalshape
        nachivpn@gmail.com}}}
 \begin{abstract}
   Intuitionistic modal logic (IML) has inspired several developments
   in programming languages including modal type systems for staging,
   computational effects and language-based security. IMLs are
   typically studied using Kripke-style relational semantics, which
   simplifies proofs of meta-theoretic properties, such as
   completeness and consistency, by making it easy to construct
   models. Kripke-style relational semantics, however, relies upon
   classical reasoning principles, which makes it unappealing from a
   computational perspective and unsuitable for formalization in a
   constructive type theory. Goldblatt provides an alternative
   semantics for IMLs by extending Beth-Kripke-Joyal-style ``cover''
   semantics for intuitionistic propositional logic with relations to
   support modalities. Goldblatt's ``relational cover'' semantics
   overcomes classical reasoning but introduces a new limitation: it
   relies upon a ``modal localization'' condition that restricts the
   class of models and complicates model construction. Goldblatt
   bypasses this restriction by using intricate order-theoretic
   completion arguments to prove completeness. In this article, we
   present a conservative extension of relational cover semantics that
   alleviates this restriction and is amenable to simpler and standard
   model construction techniques. We formalize our semantics in Agda
   and prove completeness constructively in the style of Normalization
   by Evaluation for a variety of IMLs featuring independent box and
   diamond modalities.
\end{abstract}


\begin{keyword}
constructive completeness, intuitionistic modal logic, normalization by evaluation
\end{keyword}

\end{frontmatter}

\section{Introduction}\label{sec:intro}

Intuitionistic modal logic (IML) is the study of formal logics that
extend intuitionistic propositional logic with modalities such as the
box ($\BoxPr$) and diamond ($\DiaPr$) connectives.
Early work on IML can be found beginning with Fitch~\cite{Fitch48} in
the late 1940s, followed by pioneering contributions from
Fischer-Servi~\cite{Servi77,Servi81}, Božić and Došen~\cite{BozicD84},
Sotirov~\cite{Sotirov80}, and many others
since~\cite{PlotkinS86,Wijesekera90,Simpson94a}.
These studies have found various applications in computer science,
notably inspiring the design of modal type systems in programming
languages for distributed computing~\cite{MurphyCHP04},
meta\hyp{}programming~\cite{DaviesP01,NanevskiPP08}, guarded
recursion~\cite{BizjakGCMB16,BirkedalCMMPS20} and language\hyp{}based
security~\cite{GargP06,BorghuisF00,AbadiBHR99}.
Several recent
developments~\cite{TangWDHLL25,GouniPA25,LorenzenWDEL24,HuP24,Ahman23}
in modal type systems can be directly traced back to earlier
work~\cite{Borghuis94,PfenningW95,PfenningD01} on the proof theory and
natural deduction calculi for IMLs.

In contrast to the enthusiastic adoption of the proof theory for IMLs,
the model\hyp{}theoretic strengths of IMLs remain largely
under\hyp{}utilized in the study of programming languages.
This is an opportunity missed: modal logics enjoy a rich semantic
foundation with slick model construction techniques that could
simplify the way we currently reason about modal type systems.
The trouble, however, lies in the fact that most developments in the
semantics of IMLs rely upon classical reasoning principles, such as
proof by contradiction or the axiom of choice, which inhibits their
adoption in the study of programming languages.
The objective of this article is to develop a new semantics for IMLs
that does not require classical reasoning.

\textbf{Kripke\hyp{}style relational semantics}. The standard
semantics used to model IMLs extends Kripke's semantics for
\IPL~\cite{Kripke65} using an \emph{accessibility}
relation~\cite{Simpson94a}.
The truth of a formula is given using a triple~$F =
\triple{W}{\Ri{}{}}{\Rm}$ known as a \emph{frame}, which consists of a
set~$W$ of worlds, a partial order relation~$\Ri{}{}$ on worlds and an
accessibility relation~${\Rm} \subseteq W \times W$ subject to certain
compatibility conditions.
Given a model~$\Mod[M] = \tuple{F,V}$, consisting of a frame~$F$ and a
valuation~$V$ of propositional atoms, we say that a formula~$A$ is
true for a world~$w$ whenever the \emph{satisfaction}
relation~$\Mod[M], w \Vdash A$ holds.
The satisfaction relation is defined for an IML by extending the usual
definition for \IPL originally given by Kripke~\cite{Kripke65}.
In particular, satisfaction is defined for the \emph{positive},
i.e. falsity ($\BotPr$) and disjunction ($\OrPr$), connectives as:
\begin{equation*}
  \begin{array}{l@{\;\Vdash\;} @{\;} l @{\;\text{iff}\;}c@{\;} l}
    \Mod[M],w & \BotPr & & \text{false} \\
    \Mod[M],w & A \OrPr B & & \Mod[M],w \Vdash A\ \text{or}\ \Mod[M],w \Vdash B
  \end{array}
\end{equation*}

The satisfaction of modal formulas, typically $\BoxPr{A}$ and
$\DiaPr{A}$, is defined using the accessibility relation~$\Rm$ and can
vary significantly depending on the logic and applications under
consideration.
A comprehensive formal analysis of these variations can be found in a
recent survey of IMLs by De Groot et al.~\cite{DegrootSC25}, who
propose a sweeping generalization of several common variants for boxes
and diamonds as:
\begin{equation*}
  \begin{array}{l@{\;\Vdash\;} @{\;} l @{\;\text{iff}\;}c@{\;} l}
    \Mod[M], w & \BoxPr{A} & & \forall w'.\, \Ri{w}{w'}\ \text{implies}\ \forall v.\, w' \Rm v\  \text{implies}\ \Mod[M],v \Vdash A \\
    \Mod[M], w & \DiaPr{A} & & \forall w'.\, \Ri{w}{w'}\ \text{implies}\ \exists v.\, w' \Rm v\  \text{and}\ \Mod[M],v \Vdash A
  \end{array}
\end{equation*}

Kripke\hyp{}style relational semantics has been used to model a wide
variety of IMLs for their remarkable ability to simplify proofs of
complex meta\hyp{}theoretic properties such as completeness and
consistency by making it easy to construct a model.
To construct a model, we need to identify four
parameters~$\tuple{W,\Ri{}{},\Rm,V}$ and show that they satisfy the
necessary compatibility conditions---a process that requires far less
ingenuity, for example, in comparison with algebraic models based on
Heyting algebras.
The price to pay, however, is that the proof of completeness, in
particular the canonical model construction, typically relies upon the
existence of \emph{prime filters}~that presumes availability of the
axiom of choice \cite[Remark 2.2]{Kavvos24a}.

\textbf{Relational cover semantics}.
Goldblatt~\cite{Goldblatt11a,Goldblatt11b} provides an alternative
semantics for IMLs by extending a re\hyp{}development of the
so\hyp{}called cover, or ``Kripke\hyp{}Joyal'', semantics for \IPL
using an accessibility relation for each modality in an IML.
Goldblatt's re\hyp{}development presents ideas that ``originated in
topos theory, in the logic of categories of sheaves'', typically
attributed to Joyal~\cite[Section 1]{Kock76} and Beth~\cite{Beth56},
``in a more general context that abstracts away from topological
spaces''.
A detailed account of this re\hyp{}development can be found in
\cite[Section 3]{Goldblatt11b}.
As opposed to a frame, the truth of a formula is given using a
\emph{cover system}~$C = \tuple{W,\Ri{}{},\Covp{}{}}$, consisting of a
partial order~$\tuple{W,\Ri{}{}}$ and a \emph{covering}
relation~$\Covp{}{}\ \subseteq W \times \Pow{(W)}$, accompanied by
accessibility relations such as $\RmBox$ and $\RmDia$ subject to
certain compatibility conditions.

The definition of the satisfaction relation departs notably from
Kripke\hyp{}style semantics for \emph{both} the positive connectives
and the modalities.
The clauses below define satisfaction for the positive connectives in
relational cover semantics by removing the ``immediacy'' that
Kripke\hyp{}style semantics necessitates.
\begin{equation*}
  \begin{array}{l@{\;\Vdash\;} @{\;} l @{\;\text{iff}\;}c@{\;} l}
    \Mod[M],w & \BotPr & & \Covp{w}{\emptyset} \\
    \Mod[M],w & A \OrPr B & & \exists \alpha.\, \Covp{w}{\alpha}\ \text{and}\ \forall v \in \alpha.\, \Mod[M],v \Vdash A\ \text{or}\ \Mod[M],v \Vdash B
    \end{array}
\end{equation*}
A formula~$A \OrPr B$ is true for world~$w$ iff either $A$ or $B$ is
true, not necessarily for $w$ itself, but for all worlds~$v$ in some
subset~$\alpha \subseteq W$ that \emph{covers} $w$.
Similarly, $\BotPr$ is true for a world~$w$ iff the empty set
$\emptyset$ covers~$w$.

The clauses below define satisfaction for the $\BoxPr$ and
$\DiaPr$ modalities in a somewhat unusual manner, in contrast to
Kripke\hyp{}style semantics, by treating both modalities like diamonds
in classical modal logic.
\begin{equation*}
  \begin{array}{l@{\;\Vdash\;} @{\;} l @{\;\text{iff}\;}c@{\;} l}
    \Mod[M], w & \BoxPr{A} & & \exists v.\, w \RmBox v\  \text{and}\ \Mod[M],v \Vdash A \\
    \Mod[M], w & \DiaPr{A} & & \exists v.\, w \RmDia v\  \text{and}\ \Mod[M],v \Vdash A
  \end{array}
\end{equation*}
Goldblatt treats all modalities alike under the slogan that ``there is
more to intuitionistic modal logic than the generalisation of
properties of boxes and diamonds from Boolean modal
logic''~\cite{Goldblatt11a}.
The logical properties of each individual modality is modeled by
imposing additional conditions on its respective accessibility
relation.
For example, the necessitation rule for the box modality (if $A$ is a
valid formula, then so is $\BoxPr{A}$) is modeled by requiring
$\RmBox$ to be a serial relation~\cite[Section 7]{Goldblatt11a}.
The result is a uniform semantics that models a variety of IMLs
including Bellin et al.'s~Constructive K (\CK)~\cite{BellinPR01},
Bierman and de Paiva's Constructive S4 (\CSFour)~\cite{BiermanP00} and
Fairtlough and Mendler's Propositional Lax Logic
(\LL)~\cite{FairtloughM97}.
These results readily extend further to weaker IMLs including
sublogics of \CK, namely Božić and Došen's~\CKBox
and~\CKDia~\cite{BozicD84}, and sublogics of \LL, namely the
logics~\SL,~\RL and~\JL~\cite{Valliappan26}.

A notable character of relational cover semantics is that it does not
demand classical reasoning since Goldblatt's completeness proofs do
not use prime filters.
Relational cover semantics, however, introduces a new problem: it no
longer supports standard model construction techniques used to
construct canonical models.
In an attempt to prove completeness by constructing a
Henkin\hyp{}style canonical model, Goldblatt encounters a ``stumbling
block''~\cite[Section 8]{Goldblatt11a} due to a condition imposed on
relational cover models known as \emph{modal localization}, which has
to do with an interaction between the relations~$\Rm$ and~$\Covp{}{}$.
Goldblatt bypasses this roadblock by resorting to the use of
\emph{MacNeille completion} to construct a different kind of model
that satisfies modal localization.
The details of Goldblatt's construction are rather intricate and more
involved than well\hyp{}known techniques used to prove completeness
for the cover semantics of \IPL---reasons which have likely inhibited
the larger adoption of relational cover semantics in IML literature.

\textbf{Modal cover semantics}.
In this article, we present a conservative extension of relational
cover semantics, which we shall call \emph{modal cover semantics}, by
replacing the accessibility relation~${\Rm} \subseteq W \times W$ that
accompanies a cover system in a relational cover model with a
\emph{modal covering}
relation~${\Covm{}{}} \subseteq W \times \Pow{(W)}$.
For a world~$w$, the modal covering relation relaxes the concept of a
possible ``future'' world~$v \in W$, given by the
relationship~$w \Rm v$, to a collection or ``neighborhood'' of
possible future worlds~$\alpha \subseteq W$, given by the
relationship~$\Covm{w}{\alpha}$.
The resulting semantics retains the convenience of model construction
in Kripke\hyp{}style relational semantics while continuing to avoid
classical reasoning as in relational cover semantics.
We show that modal cover semantics can be used to model four
important IMLs, which are namely:
\begin{enumerate}
\item a minimal monotone modal logic~\CM featuring a modality~$\MonPr$
  that generalizes both $\BoxPr$ and $\DiaPr$ and exhibits only the
  monotonicity rule (if the formula~$A \ImpPr B$ is valid, then so is
  $\MonPr{A} \ImpPr \MonPr{B}$)
\item the minimal lax logic~\SL with a modality~$\DiaPr$ that exhibits
  only the axiom~$\SA : A \AndPr \LaxPr{B} \ImpPr \LaxPr{(A \AndPr
    B)}$
\item the full lax logic~\LL that extends \SL with
  axioms~$\RA : A \ImpPr \LaxPr{A}$ and~$\JA : \LaxPr{\LaxPr{A}}
  \ImpPr \LaxPr{A}$
\item the minimal box logic~\CKBox with a modality~$\BoxPr$ that
  exhibits the necessitation rule (if $A$ is valid, then so is
  $\Box{A}$) and the distribution axiom~$\KA : \BoxPr{(A \ImpPr B)}
  \ImpPr \BoxPr{A} \ImpPr \BoxPr{B}$
\end{enumerate}

The logic~\CM can be found in a recent study of monotone logics by De
Groot~\cite[Definition 2.9]{Degroot25}, where we take a single
monotone modality~$\MonPr$ instead of $\BoxPr$ and $\DiaPr$.
\CM is the logic of \emph{Modal Heyting algebras}~\cite[Section
  4]{Goldblatt11a} and the monotonicity rule in \CM corresponds to
functoriality in category theory.
For our purposes, \CM serves as a small toy logic with a
non\hyp{}trivial extension to \IPL that makes it easy to illustrate
the main ideas underlying modal cover semantics.
The logic~\SL (for ``S\hyp{}lax'' logic) is a minimal sublogic of \LL,
the latter of which has been studied
extensively~\cite{AlechinaMPR01,BentonBP98,FairtloughM97} and is well
known as the IML corresponding to strong monads~\cite{Moggi91}.
The axiom~\SA corresponds to strength of a functor, while the
axioms~\RA and~\JA correspond to the properties of a monad.
The logic~\CKBox is the smallest box\hyp{}only IML that underlies the
most widely studied box\hyp{}only IML~\CSFourBox.
\CKBox was given a dual\hyp{}context natural deduction system by
Kavvos~\cite{Kavvos17} by following the influential work of Pfenning
and Davies~\cite{PfenningD01} on \CSFourBox.
For our purposes, \CKBox serves as a example of an IML which can be
modeled using modal cover semantics despite requiring a special
proof system that departs from the usual single\hyp{}context systems
used for the other logics.

We prove soundness for these logics by showing that modal cover
models determine equivalent algebraic models, and prove completeness
constructively in the style of Normalization by
Evaluation~\cite{CoquandD97,Coquand93,Coquand02} by constructing a
Henkin\hyp{}style canonical model.
Furthermore, we show that the completeness proofs can be readily
refined to give a normalization algorithm that normalizes proofs in
the respective natural deduction system of the logic, which yields as
corollaries the subformula property and logical consistency.
All theorems in this article have been formalized in Agda, and the
formalization can be found at the URL:
\begin{center}
  \url{https://github.com/nachivpn/cover}
\end{center}

\section{Overview of Cover Semantics}\label{sec:overview}

In this section, we begin with a recap of cover semantics for \IPL
(\cref{sec:overview:ipl}) and give an overview of the trouble with
relational cover semantics (\cref{sec:overview:relational}).
We then illustrate our new semantics by defining modal cover
semantics for the logic~\CM (\cref{sec:overview:reg-cover}) and extend
this to the remaining IMLs in later sections.
The results in
sections~\cref{sec:overview:ipl,sec:overview:relational} are
well\hyp{}known and partially due to Goldblatt~\cite{Goldblatt11a}.

\subsection{Cover Semantics for \IPL}
\label{sec:overview:ipl}

The language of \IPL consists of formulas defined inductively by
propositional atoms ($p$,~$q$,~$r$, etc.), constants~$\TopPr$
and~$\BotPr$, and binary logical connectives~$\AndPr$,~$\OrPr$
and~$\ImpPr$.
As usual, the connectives~$\AndPr$ and~$\OrPr$ have higher operator
precedence than $\ImpPr$, and all binary connectives associate to the
right when they are nested.
\begin{align*}
  \Prop\ \ A,B  :=  p,q,r,\ldots\ |\ \TopPr\ |\ \BotPr\ |\ A \AndPr B\ |\ A \OrPr B\ |\ A \ImpPr B \qquad
  \Ctx\ \ \Gamma, \Delta  :=  \EmptyCtx\ |\ \ExtCtx{\Gamma}{A}
\end{align*}
The constants~$\TopPr$ and $\BotPr$ respectively denote universal
truth and falsity, and the connectives~$\AndPr$,~$\OrPr$
and~$\ImpPr$ respectively denote conjunction, disjunction and
implication.
A context~$\Gamma$ is a finite multiset of formulas
$A_1, A_2,..., A_n$, and $\EmptyCtx$ denotes the empty context.
A sequent\hyp{}style natural deduction proof system for \IPL is given
using the \emph{inference} rules defined in \cref{fig:system-ipl}.
A \emph{judgment}~$\Gamma \vdash A$ is an assertion that denotes
formula~$A$ has a proof under the assumption that all formulas in
context~$\Gamma$ have a proof.
A judgment~$\Gamma \vdash A$ \emph{holds}, written simply as
``$\Gamma \vdash A$'', when it can be derived using the inference
rules in \cref{fig:system-ipl}.

\begin{figure}[h]
  \begin{mathpar}
    \inferrule[Hyp]{%
      A \in \Gamma
    }{%
      \Gamma \vdash A
    }

    \inferrule[$\TopPr$\nbhyp{}Intro]{%
    }{%
      \Gamma \vdash \TopPr
    }%

    \inferrule[$\BotPr$\nbhyp{}Elim]{%
      \Gamma \vdash \BotPr
    }{%
      \Gamma \vdash A
    }

    \inferrule[$\AndPr$\nbhyp{}Intro]{%
      \Gamma \vdash A \\
      \Gamma \vdash B
    }{%
      \Gamma \vdash A \AndPr B
    }%

    \inferrule[$\AndPr$\nbhyp{}Elim\nbhyp{}1]{%
      \Gamma \vdash A \AndPr B
    }{%
      \Gamma \vdash A
    }%

    \inferrule[$\AndPr$\nbhyp{}Elim\nbhyp{}2]{%
      \Gamma \vdash A \AndPr B
    }{%
      \Gamma \vdash B
    }

    \inferrule[$\ImpPr$\nbhyp{}Intro]{
      \ExtCtx{\Gamma}{A} \vdash B
    }{%
      \Gamma \vdash A \ImpPr B
    }%

    \inferrule[$\ImpPr$\nbhyp{}Elim]{%
      \Gamma \vdash A \ImpPr B\\
      \Gamma \vdash A
    }{%
      \Gamma \vdash B
    }%

    \inferrule[$\OrPr$\nbhyp{}Intro\nbhyp{}1]{
      \Gamma\vdash A
    }{%
      \Gamma \vdash A \OrPr B
    }%

    \inferrule[$\OrPr$\nbhyp{}Intro\nbhyp{}2]{
      \Gamma\vdash B
    }{%
      \Gamma \vdash A \OrPr B
    }%

    \inferrule[$\OrPr$\nbhyp{}Elim]{%
      \Gamma \vdash A \OrPr B\\
      \ExtCtx{\Gamma}{A} \vdash C\\
      \ExtCtx{\Gamma}{B} \vdash C
    }{%
      \Gamma \vdash C
    }%
  \end{mathpar}
  \caption{Sequent\hyp{}style natural deduction for \IPL}
  \label{fig:system-ipl}
\end{figure}

In the cover semantics of \IPL, truth of formulas is defined using a
gadget called a \emph{cover system}.
A cover system~$C = (W,\Ri{}{},\Covp{}{})$ is a tuple consisting of a
set~$W$ of worlds, a reflexive\hyp{}transitive \emph{refinement}
relation~$\Ri{}{}$ on $W$, and a \emph{covering}
relation~$\Covp{}{}\ \subseteq W \times \Pow{(W)}$ subject to certain
conditions.
We write $\Ri{w}{w'}$ or $\RiOp{w'}{w}$, saying $w'$ \emph{refines}
$w$, to denote that the relation~$\Ri{}{}$ relates the world~$w$ to
the world~$w'$.
Similarly, we write $\Covp{w}{\alpha}$ or $\CovpOp{\alpha}{w}$, saying
$w$ is \emph{covered by} $\alpha$ or $\alpha$ \emph{covers} $w$ or
$\alpha$ is a \emph{cover of} $w$, to denote that the covering
relation~$\Covp{}{}$ relates the world~$w$ to a set $\alpha$
consisting of worlds.

We can define a refinement relation $\Rin{}{}\ \subseteq \Pow{(W)}
\times \Pow{(W)}$ on subsets of worlds using the refinement
relation~$\Ri{}{}$ on worlds as: $\Rin{\alpha}{\alpha'}$ if and only
if (iff) for all worlds~$v' \in \alpha'$ there exists a world~$v \in
\alpha$ such that $\Ri{v}{v'}$.
We write $\Rin{\alpha}{\alpha'}$ or $\RinOp{\alpha'}{\alpha}$, while
saying $\alpha'$ \emph{refines} $\alpha$.
The conditions on a cover system are:
\begin{itemize}
\item \emph{Refinement}: If $\RiOp{w'}{w} \Covp{}{} \alpha$, then there exists an $\alpha'$
  such that $\Covp{w'}{\alpha'} \RinOp{}{} \alpha$.
\item \emph{Inclusion}: If $\Covp{w}{\alpha}$, then
  $\Rin{\{w\}}{\alpha}$
\item \emph{Identity}: $\Covp{w}{\{w\}}$
\item \emph{Transitivity}: If $\Covp{w}{\alpha}$ and for all $v \in
  \alpha$ there exists an $\alpha_v$ such that $\Covp{v}{\alpha_v}$,
  then $\Covp{w}{\Union_{v \in \alpha} \alpha_v}$
\end{itemize}

We may intuitively understand a world~$w$ as a ``state of knowledge'',
the refinement~$\Ri{w}{w'}$ as increase in knowledge from $w$ to $w'$,
and a cover~$\alpha$ of a world~$w$ as defining a ``locality'' of
knowledge states capturing knowledge local to $w$.
Under this reading, the refinability condition ensures that local
knowledge improves with increase in ``current'' knowledge: if
$\Ri{w}{w'}$ and~$\Covp{w}{\alpha}$, then some cover of $w'$ refines
$\alpha$.
Similarly, the inclusion condition states that knowledge local to a
world~$w$ must refine current knowledge at~$w$.

A \emph{cover model}~$\Mod[M] = (C,V)$ of \IPL couples a cover
system~$C$ with a valuation function~$V$ mapping propositional atoms
to \emph{localized up\hyp{}sets} of $W$, i.e. a
function~$V : \Atom \to \Pow{(W)}$ satisfying the conditions:
\begin{itemize}
\item \emph{Upper set}: if $\Ri{w}{w'}$ and $w \in V(p)$, then $w' \in
  V(p)$
\item \emph{Localization}: if $\exists \alpha.\, \Covp{w}{\alpha} \subseteq V(p)$, then $w \in V(p)$
\end{itemize}
The valuation function~$V$ maps a propositional atom~$p$ to a subset
of worlds~$V(p) \subseteq W$ where $p$ is true.
The conditions respectively state that $V(p)$ must be an up\hyp{}set
of the preorder~$(W,\Ri{}{})$, and that if $p$ is ``locally true'' at
$w$, i.e. at all worlds in some cover $\alpha$ of $w$, then it must be
true at $w$.

Given a cover model~$\Mod[M]$, the truth of a formula~$A$ is given
using the \emph{satisfaction} relation~$\Vdash$, for an arbitrary
world~$w \in W$ underlying the model~$\Mod[M]$, by induction on the
formula as follows:
\begin{equation*}
  \begin{array}{l@{\;\Vdash\;} @{\;} l @{\;\text{iff}\;}c@{\;} l}
    \Mod[M],w & p & & w \in V(p) \\
    \Mod[M],w & \TopPr & & \text{true} \\
    \Mod[M],w & \BotPr & & \Covp{w}{\emptyset} \\
    \Mod[M],w & A \AndPr B & & \Mod[M],w \Vdash A\ \text{and}\  \Mod[M],w \Vdash B \\
    \Mod[M],w & A \OrPr B & & \exists \alpha.\, \Covp{w}{\alpha}\ \text{and}\ \forall v \in \alpha.\, \Mod[M],v \Vdash A\ \text{or}\ \Mod[M],v \Vdash B \\
    \Mod[M],w & A \ImpPr B & & \forall \RiOp{w'}{w}.\, \Mod[M],w' \Vdash A\ \text{implies}\ \Mod[M],w' \Vdash B\
  \end{array}
\end{equation*}

We extend the satisfaction relation to contexts and write
$\Mod[M],w \Vdash \Gamma$ to denote $\Mod[M], w \Vdash A_i$ for all
formulas~$A_i$ with $1 \leq i \leq n$ in $\Gamma = A_1,\ldots,A_n$.
We define the \emph{truth set}~$\truth{A}^{\Mod[M]}$ of a formula $A$
in some model~$\Mod[M]$ as the subset of worlds where $A$ is true, and
likewise extend this definition to contexts as follows:
$$\truth{A}^{\Mod[M]} = \{w \in W\ |\ \Mod[M],w \Vdash A\} \qquad
\truth{A_1,A_2,\ldots,A_n}^{\Mod[M]} = \truth{A_1}^{\Mod[M]} \cap
\truth{A_2}^{\Mod[M]}\ldots\cap \truth{A_n}^{\Mod[M]} $$
We sometimes omit the subscript~$\Mod[M]$ and write $\truth{A}$ or
$\truth{\Gamma}$ when it is evident from the context which model we
are working with.
We write $\Mod[\Gamma] \satisfies_{\Mod[M]} A$, saying $\Gamma$
\emph{entails} $A$ \emph{in model}~$\Mod[M]$, to denote that
$\truth{\Gamma}^{\Mod[M]} \subseteq \truth{A}^{\Mod[M]}$.
In other words, $\Gamma \satisfies_{\Mod[M]} A$ if and only if
$\Mod[M],w \Vdash \Gamma$ implies $\Mod[M],w \Vdash A$ for all
worlds~$w$ in model~$\Mod[M]$.
Furthermore, we write $\Gamma \satisfies A$, saying $\Gamma$
\emph{entails} $A$, to denote $\Mod[\Gamma] \satisfies_{\Mod[M]} A$
for all models~$\Mod[M]$.

To prove soundness for \IPL, we begin with the following lemma, which
observes that the conditions imposed on the truth of atoms are also
satisfied by truth sets of arbitrary formulas and contexts.
\begin{lemma}\label[lemma]{lem:ipl-truth-lupset}
  For any formula~$A$ and cover model~$\Mod[M]$ of \IPL, the truth set
  $\truth{A}^{\Mod[M]}$ is a localized up\hyp{}set. In other words, it
  satisfies the following properties.
  It follows that $\truth{\Gamma}^{\Mod[M]}$ is a localized
  up\hyp{}set for any context~$\Gamma$.
  \begin{itemize}
  \item Upper set: if $\Ri{w}{w'}$ and $w \in \truth{A}^{\Mod[M]}$, then
    $w' \in \truth{A}^{\Mod[M]}$
  \item Localization: if
    $\exists \alpha.\, \Covp{w}{\alpha} \subseteq \truth{A}^{\Mod[M]}$, then
    $w \in \truth{A}^{\Mod[M]}$
  \end{itemize}
\end{lemma}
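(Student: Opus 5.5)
We prove both properties simultaneously by structural induction on the formula~$A$; the claim for contexts then follows because localized up\hyp{}sets are closed under finite intersections. For intersections, the upper\hyp{}set property is immediate. For localization, if $\Covp{w}{\alpha} \subseteq \truth{\Gamma}$, then $\alpha \subseteq \truth{A_i}$ for each $i$, so localization for each $A_i$ gives $w \in \truth{A_i}$. The empty context gives $W$, which trivially satisfies both properties.

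\textbf{Upper set.} Atoms hold by the upper\hyp{}set condition on $V$, and $\TopPr$ is trivial. For $\AndPr$ we use the induction hypotheses. For $\ImpPr$ we use transitivity of $\Ri{}{}$, which needs no hypothesis on $A$ or $B$. For $\BotPr$, suppose $\Ri{w}{w'}$ and $\Covp{w}{\emptyset}$. Refinement yields $\alpha'$ with $\Covp{w'}{\alpha'}$ and $\Rin{\emptyset}{\alpha'}$. The latter forces $\alpha' = \emptyset$, since every $v' \in \alpha'$ would need a predecessor in $\emptyset$. For $A \OrPr B$, suppose $\Covp{w}{\alpha}$ with every $v\in\alpha$ satisfying $A$ or $B$. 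Refinement gives $\Covp{w'}{\alpha'}$ with $\Rin{\alpha}{\alpha'}$. Each $v' \in \alpha'$ then has some $v \in \alpha$ with $\Ri{v}{v'}$, and the induction hypothesis (upper set for $A$ or for $B$) transfers the disjunct from $v$ to $v'$.

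\textbf{Localization.} Atoms hold by the localization condition on $V$, and $\TopPr$ is trivial. For $\AndPr$ we apply the induction hypotheses componentwise. For $\BotPr$, suppose $\Covp{w}{\alpha}$ and $\Covp{v}{\emptyset}$ for each $v\in\alpha$. Transitivity gives $\Covp{w}{\Union_{v\in\alpha}\emptyset} = \emptyset$. For $A\OrPr B$, each $v\in\alpha$ has a cover $\alpha_v$ witnessing the disjunction. Transitivity gives $\Covp{w}{\Union_v \alpha_v}$, and every point of the union satisfies $A$ or $B$. Choosing $\alpha_v$ for each $v$ is just the function supplied by the hypothesis $\forall v\in\alpha.\,\exists \alpha_v$, so it is constructive when read type\hyp{}theoretically.

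\textbf{The implication case is the main obstacle}, because it needs refinement, inclusion and the induction hypotheses all at once. Suppose $\Covp{w}{\alpha}\subseteq\truth{A\ImpPr B}$, and let $\Ri{w}{w'}$ with $w'\Vdash A$. We must show $w' \Vdash B$.
\begin{enumerate}
\item Refinement gives $\alpha'$ with $\Covp{w'}{\alpha'}$ and $\Rin{\alpha}{\alpha'}$.
\item By the induction hypothesis (localization for $B$), it suffices to show $\alpha' \subseteq \truth{B}$.
\item Fix $v'\in\alpha'$. There is some $v\in\alpha$ with $\Ri{v}{v'}$, and $v \Vdash A\ImpPr B$.
\item Inclusion applied to $\Covp{w'}{\alpha'}$ gives $\Ri{w'}{v'}$. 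The upper\hyp{}set property for $A$, established above, then gives $v'\Vdash A$.
\item Since $\Ri{v}{v'}$ and $v \Vdash A\ImpPr B$, we conclude $v'\Vdash B$.
\end{enumerate}
So localization for $\ImpPr$ uses upper set for $A$ and localization for $B$. This is why the two properties must be proved together, or upper set proved first for all formulas, which works since it never uses localization.
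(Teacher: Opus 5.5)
Your proposal is correct and follows the same route as the paper: induction on formulas (extended to contexts), using refinement for the up-set cases of falsity and disjunction, transitivity for their localization cases, and inclusion for localizing implication. Your implication case is more explicit than the paper's sketch, which mentions only inclusion; you correctly note that refinement and the up-set property for the antecedent are also needed there.
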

\begin{proof}
  By induction on formula~$A$ and context~$\Gamma$. To prove that
  truth sets are up\hyp{}sets, we use the refinement condition for the
  cases of~$\bot$ and~$A \OrPr B$. To prove that truth sets are
  localizing, on the other hand, we use the inclusion condition for
  the case of $A \ImpPr B$ and the transitivity condition for the
  cases of~$\bot$ and~$A \OrPr B$. The remaining cases for both
  properties follow readily from the induction hypotheses.
\end{proof}

\begin{remark}
The statement of \cref{lem:ipl-truth-lupset} can be strengthened
further to show truth sets are \emph{hyper localized}, meaning
$\exists \alpha.\, \Covp{w}{\alpha} \subseteq \truth{A}^{\Mod[M]}$ if
and only if $w \in \truth{A}^{\Mod[M]}$.
This is because the identity condition ensures that every world is
covered by itself, meaning $\Covp{w}{\{w\}}$, thus forcing every
localized set to also be hyper localized.
The identity condition, however, is not always desirable and can be
replaced with a weaker condition such as Goldblatt's \emph{existence}
condition~\cite[Section 3]{Goldblatt11a}.
\end{remark}

\begin{proposition}[Soundness for \IPL]\label[prop]{prop:ipl-sound}
If \,$\Gamma \vdash A$ holds, then so does $\Gamma \satisfies A$.
\end{proposition}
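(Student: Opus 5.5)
We prove the statement by induction on the derivation of $\Gamma \vdash A$. For each inference rule in \cref{fig:system-ipl}, we assume the premises hold semantically in an arbitrary cover model $\Mod[M]$ and an arbitrary world $w$, and show that the conclusion does. Throughout, \cref{lem:ipl-truth-lupset} does the main work: truth sets of formulas and contexts are localized up\hyp{}sets. Because $\truth{\Gamma}$ is an up\hyp{}set, any $w' \geq w$ with $w \Vdash \Gamma$ also satisfies $\Gamma$. Because truth sets are localized, we can conclude truth at $w$ from truth on a cover.

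The routine cases go as follows. \textsc{Hyp}, $\TopPr$\nbhyp{}Intro and the $\AndPr$ rules are immediate. For $\ImpPr$\nbhyp{}Intro, we take $w \Vdash \Gamma$ and $\Ri{w}{w'}$ with $w' \Vdash A$. By the upper\hyp{}set property, $w' \Vdash \Gamma$, so $w' \Vdash \ExtCtx{\Gamma}{A}$ and the induction hypothesis gives $w' \Vdash B$. For $\ImpPr$\nbhyp{}Elim, we instantiate the implication clause at $w' = w$ using reflexivity of $\Ri{}{}$. For the $\OrPr$\nbhyp{}Intro rules, the identity condition $\Covp{w}{\{w\}}$ supplies the cover $\{w\}$.

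The elimination rules for $\BotPr$ and $\OrPr$ are the main point. For $\BotPr$\nbhyp{}Elim, from $w \Vdash \Gamma$ the induction hypothesis gives $\Covp{w}{\emptyset}$. Since $\emptyset \subseteq \truth{A}$ holds vacuously, localization of $\truth{A}$ (\cref{lem:ipl-truth-lupset}) yields $w \Vdash A$. For $\OrPr$\nbhyp{}Elim, the induction hypothesis gives a cover $\Covp{w}{\alpha}$ such that each $v \in \alpha$ satisfies $A$ or $B$. We then show $\alpha \subseteq \truth{C}$ and apply localization of $\truth{C}$.

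To show $\alpha \subseteq \truth{C}$, take $v \in \alpha$. By the inclusion condition, $\Rin{\{w\}}{\alpha}$, so $\Ri{w}{v}$. The upper\hyp{}set property then transports $\Gamma$ from $w$ to $v$. If $v \Vdash A$, then $v \Vdash \ExtCtx{\Gamma}{A}$ and the second premise gives $v \Vdash C$. The case $v \Vdash B$ uses the third premise in the same way.

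I expect this $\OrPr$\nbhyp{}Elim case to be the only delicate step, because it is the one place where the inclusion condition must be combined with both halves of \cref{lem:ipl-truth-lupset}. Everything else is bookkeeping over the inductive structure of derivations.
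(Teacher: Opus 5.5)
Your proposal is correct and follows the paper's appendix proof step for step: induction on the derivation, with the up\hyp{}set half of \cref{lem:ipl-truth-lupset} for $\ImpPr$\nbhyp{}Intro, localization of $\truth{A}$ applied to the empty cover for $\BotPr$\nbhyp{}Elim, and, for $\OrPr$\nbhyp{}Elim, the inclusion condition giving $\Ri{w}{v}$ followed by both halves of the lemma. Your treatment of $\OrPr$\nbhyp{}Intro via the identity condition just spells out a routine case the paper leaves implicit.
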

\begin{proof}
  By induction on the derivation of $\Gamma \vdash A$, using
  \cref{lem:ipl-truth-lupset} where needed (see \cref{sec:app}).
\end{proof}

We now turn our attention to proving completeness.
Following the standard practice, we will achieve this by constructing
a \emph{canonical} model~$\Mod[N]$ that equates entailment in the
model~$\Gamma \satisfies_{\Mod[N]} A$ with derivability of the
corresponding judgment~$\Gamma \vdash A$.
We begin with a few definitions for this purpose.

\begin{definition}
  Given a formula~$A$, we define the set~$\AntTh{(A)}$, called the
  \emph{antecedents} of $A$, as the set of contexts that~$A$ can be
  proved under the assumption of, i.e.
  $\AntTh{(A)} = \{\Gamma \in \Ctx\ |\ \Gamma \vdash A\}$.
\end{definition}

\begin{definition}\label[def]{def:cov-sys-ipl}
  Define a cover
  system~$C_{\text{\IPL}} = \tuple{\Ctx,\subseteq,\CovpIPL{}{}}$ by
  taking the set~$\Ctx$ of contexts for worlds~$W$, the context
  inclusion relation~$\subseteq$ for the preorder relation~$\Ri{}{}$,
  and the below inductively defined
  relation~$\CovpIPL{}{}\ \subseteq \Ctx \times \Pow{(\Ctx)}$ for the
  covering relation~$\Covp{}{}$.
  The relation~$\CovpIPL{}{}$ can be verified to satisfy the
  refinability, inclusion, identity and transitivity conditions by
  induction on its definition.
  \begin{mathpar}
    \inferrule[]{%
    }{%
      \CovpIPL{\Gamma}{\{\Gamma\}}
    }

    \inferrule[]{%
      \Gamma \in \AntTh{(\BotPr)}}{%
      \CovpIPL{\Gamma}{\emptyset}
    }

    \inferrule[]{%
      \Gamma \in \AntTh{(A \OrPr B)}\\
      \CovpIPL{\ExtCtx{\Gamma}{A}}{\alpha_1} \\
      \CovpIPL{\ExtCtx{\Gamma}{B}}{\alpha_2} \\
    }{%
      \CovpIPL{\Gamma}{\alpha_1 \union \alpha_2}
    }
  \end{mathpar}
\end{definition}
%

\begin{lemma}[Truth Lemma]\label[lem]{lem:ipl-truth-lemma}
  The tuple~$\Mod[N] = \tuple{C_{\text{\IPL}},\AntTh}$ is a cover
  model of \IPL with the characteristic property that, for every
  formula~$A$ and context~$\Gamma$, we have
  $\Gamma \in \truth{A}^{\Mod[N]}$ if and only if\,
  $\Gamma \in \AntTh{(A)}$.
\end{lemma}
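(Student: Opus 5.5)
The plan has two parts. First, check that $\Mod[N]$ is a cover model, i.e.\ that $\AntTh{(p)}$ is a localized up-set for each atom $p$. Then prove the characteristic property by induction on the formula $A$, simultaneously for all contexts $\Gamma$.

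For the first part, the upper-set condition is weakening: if $\Gamma \subseteq \Gamma'$ and $\Gamma \vdash p$, then $\Gamma' \vdash p$. Since contexts are multisets and the rules are closed under adding hypotheses, this is a routine induction on derivations. Localization is the general fact that, for every formula $C$, if $\CovpIPL{\Gamma}{\alpha}$ and $\alpha \subseteq \AntTh{(C)}$, then $\Gamma \in \AntTh{(C)}$. We prove this by induction on the derivation of $\CovpIPL{\Gamma}{\alpha}$:
\begin{itemize}
\item In the identity case, $\alpha = \{\Gamma\}$ and the claim is immediate.
\item In the $\BotPr$ case, we apply $\BotPr$-Elim to $\Gamma \vdash \BotPr$.
\item In the $\OrPr$ case, the induction hypotheses give $\ExtCtx{\Gamma}{A} \vdash C$ and $\ExtCtx{\Gamma}{B} \vdash C$. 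Combining these with $\Gamma \vdash A \OrPr B$ by $\OrPr$-Elim gives $\Gamma \vdash C$.
\end{itemize}
We call this the \emph{local-to-global lemma} and reuse it below.

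For the truth lemma, the atom and $\TopPr$ cases hold by definition, and the $\AndPr$ case follows from the induction hypotheses together with the $\AndPr$ rules.

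For $\BotPr$: the forward direction applies the local-to-global lemma with $\alpha = \emptyset$, where the inclusion $\emptyset \subseteq \AntTh{(\BotPr)}$ holds vacuously. The backward direction is the second covering rule.

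For $A \OrPr B$, the backward direction uses the third covering rule with $\alpha_1 = \{\ExtCtx{\Gamma}{A}\}$ and $\alpha_2 = \{\ExtCtx{\Gamma}{B}\}$, both obtained from the identity rule. The context $\ExtCtx{\Gamma}{A}$ satisfies $A$ by the induction hypothesis applied to Hyp, and symmetrically for $\ExtCtx{\Gamma}{B}$. For the forward direction, each $\Delta \in \alpha$ satisfies $A$ or $B$, so by the induction hypothesis $\Delta \vdash A$ or $\Delta \vdash B$. In either case $\Delta \vdash A \OrPr B$, and the local-to-global lemma then yields $\Gamma \vdash A \OrPr B$.

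For $A \ImpPr B$, the backward direction takes any $\Gamma' \supseteq \Gamma$ with $\Gamma' \Vdash A$. By weakening and the induction hypothesis, $\Gamma' \vdash A \ImpPr B$ and $\Gamma' \vdash A$, and $\ImpPr$-Elim gives $\Gamma' \vdash B$, hence $\Gamma' \Vdash B$. The forward direction instantiates the assumption at $\Gamma' = \ExtCtx{\Gamma}{A}$. By Hyp and the induction hypothesis this context satisfies $A$, so it satisfies $B$, so $\ExtCtx{\Gamma}{A} \vdash B$, and $\ImpPr$-Intro finishes.

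I expect the main obstacle to be the forward direction of disjunction together with $\BotPr$, since the semantics only gives local truth over an arbitrary cover. This is exactly why the covering relation was defined inductively from derivations. The key step is therefore the local-to-global lemma, and I would prove it first, by induction on the covering derivation for an arbitrary target formula $C$. Weakening and the fact that the canonical relation satisfies the cover-system conditions are routine and are asserted in \cref{def:cov-sys-ipl}.
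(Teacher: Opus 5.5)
Your proposal is correct and follows the paper's proof. The paper also establishes that $\AntTh{(p)}$ is a localized up-set by induction on the inductively defined covering relation, and then proves the characteristic property by induction on the formula $A$. Your local-to-global lemma, stated for an arbitrary target formula so that it also handles the $\BotPr$ and $\OrPr$ cases, is exactly that induction; the paper spells out the same step explicitly in its truth lemma for the modal logic (\cref{lem:cm-truth-lemma}).
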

\begin{proof}
  We first check that the valuation~$\AntTh{(p)}$ of an arbitrary
  atom~$p$ is a localized up\hyp{}set by induction on the covering
  relation~$\CovpIPL{}{}$, and then show the ``characteristic''
  property by induction on the formula~$A$.
\end{proof}

\begin{theorem}[Completeness for \IPL]\label[thm]{thm:ipl-complete}
If \,$\Gamma \satisfies A$, then $\Gamma \vdash A$.
\end{theorem}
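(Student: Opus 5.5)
The plan is to instantiate the hypothesis $\Gamma \satisfies A$ at the canonical model~$\Mod[N] = \tuple{C_{\text{\IPL}},\AntTh}$ of \cref{lem:ipl-truth-lemma}. Since $\Gamma \satisfies A$ quantifies over all cover models, and \cref{lem:ipl-truth-lemma} tells us that $\Mod[N]$ is one, we get $\truth{\Gamma}^{\Mod[N]} \subseteq \truth{A}^{\Mod[N]}$. This reduces completeness to showing that the world~$\Gamma$ itself lies in $\truth{\Gamma}^{\Mod[N]}$.

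To show $\Gamma \in \truth{\Gamma}^{\Mod[N]}$, write $\Gamma = A_1,\ldots,A_n$; we must show $\Gamma \in \truth{A_i}^{\Mod[N]}$ for each $i$. By the characteristic property of \cref{lem:ipl-truth-lemma}, this is equivalent to $\Gamma \in \AntTh{(A_i)}$, that is, $\Gamma \vdash A_i$. This holds immediately by the rule \textsc{Hyp}, since $A_i \in \Gamma$. Hence $\Gamma \in \truth{\Gamma}^{\Mod[N]} \subseteq \truth{A}^{\Mod[N]}$.

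Applying the characteristic property once more, in the other direction, gives $\Gamma \in \AntTh{(A)}$, which by definition means $\Gamma \vdash A$.

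There is no real obstacle at this level: all the substantive work (checking that $\CovpIPL{}{}$ satisfies the cover conditions, that $\AntTh{(p)}$ is a localized up\hyp{}set, and the inductive truth lemma, with its delicate disjunction and falsity cases) has already been discharged in \cref{lem:ipl-truth-lemma}. The only point requiring care is that the truth lemma is used in both directions, once for each assumption~$A_i$ and once for the conclusion~$A$.
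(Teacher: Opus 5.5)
Your proof is correct and matches the paper's argument: instantiate $\Gamma \satisfies A$ at the canonical model of \cref{lem:ipl-truth-lemma}, show that the world $\Gamma$ satisfies $\Gamma$, and use the truth lemma in both directions. The only difference is cosmetic. The paper derives $\Gamma \vdash \bigwedge\Gamma$ by induction on $\Gamma$ and uses $\truth{\bigwedge\Gamma} = \truth{\Gamma}$, whereas you apply the truth lemma to each hypothesis $A_i$ via \textsc{Hyp}, which is slightly more direct.
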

\begin{proof}
  We first show $\Gamma \in \AntTh{(\bigwedge \Gamma)}$, equivalently
  $\Gamma \vdash \bigwedge \Gamma$, using the inference rules for
  \IPL by induction on the context~$\Gamma$.
  By applying \Cref{lem:ipl-truth-lemma} (from right to left of the
  bi\hyp{}implication) we infer
  $\Gamma \in \truth{\bigwedge \Gamma}^{\Mod[N]}$.
  Moreover, we observe
  $\truth{\bigwedge \Gamma}^{\Mod[N]} = \truth{\Gamma}^{\Mod[N]}$ by
  appealing to the definition of truth sets and further infer
  $\Gamma \in \truth{\Gamma}^{\Mod[N]}$.

  Given \,$\Gamma \satisfies A$, we have
  $\truth{\Gamma}^{\Mod[N]} \subseteq \truth{A}^{\Mod[N]}$ since
  $\Mod[N]$ is a model of \IPL.
  This means we have $\Gamma \in \truth{A}^{\Mod[N]}$, and by applying
  \Cref{lem:ipl-truth-lemma} once again (from left to right), we
  conclude $\Gamma \in \AntTh{(A)}$ and thus $\Gamma \vdash A$.
\end{proof}

\subsection{Relational Cover Semantics for IMLs}
\label{sec:overview:relational}

The language of the intuitionistic modal logic~\CM extends that of
\IPL with a unary connective~$\MonPr$, and its natural deduction proof
system extends that of \IPL with a rule~\CM/$\MonPr$\nbhyp{}Mon (for
``monotonicity'').
\begin{align*}
  \Prop\ \ A,B  & :=  \ldots\ |\ \MonPr{A} \qquad
  \inferrule[\CM/$\MonPr$\nbhyp{}Mon]{%
      \Gamma  \vdash \MonPr{A} \\
      A \vdash B
    }{%
      \Gamma \vdash \MonPr{B}
    }
\end{align*}

A \emph{relational cover system}~$\tuple{C,\Rm}$ extends the
definition of a cover system~$C = \tuple{W,\Ri{}{},\Covp{}{}}$ with an
\emph{accessibility} relation~$\Rm$.
The relation $\Rm$ is a binary relation on worlds subject to the modal
refinability and localization conditions stated below.
We write $w \Rm v$ or $v \RmOp w$, and say $w$ \emph{can access} $v$
or $v$ is \emph{accessible from} $w$, to denote that the world~$w$ is
related to world~$v$ via the relation~$\Rm$.
To state the modal conditions, we define an operator~$\rOp$ on subsets
of $W$, for a given $X \subseteq W$, as:
$$\rOp{X} = \{w \in W\ |\ \exists x \in X.\, w \Rm x \}$$
The set~$\rOp{X}$ identifies all worlds that can access some
world in $X$.
The modal conditions are:
\begin{itemize}
\item \emph{Modal Refinement}: If $\RiOp{w'}{w} \Rm v$, then there
  exists a $v'$ such that $w' \Rm \RiOp{v'}{v}$
\item \emph{Modal Localization}: If
  $\Covp{w}{\alpha} \subseteq \rOp{X}$, then there exists a $v$ and
  $\alpha_v$ such that $w \Rm \Covp{v}{\alpha_v} \subseteq X$.
\end{itemize}

A \emph{relational cover model}~$\Mod[M] = \triple{C}{R}{V}$ of \CM
couples a relational cover system~$\tuple{C,R}$ with a valuation
function~$V$ mapping propositional atoms to localized up\hyp{}sets of
$W$---as before with \IPL.
The truth of \CM formulas is defined by extending the satisfaction
relation for \IPL to modal formulas as follows:
\begin{equation*}
  \begin{array}{l@{\;\Vdash\;} @{\;} l @{\;\text{iff}\;}c@{\;} l}
    \Mod[M],w & \MonPr{A} & & \exists v.\, w \Rm v\  \text{and}\ \Mod[M],v \Vdash A
  \end{array}
\end{equation*}
To prove soundness for \CM, we re\hyp{}establish
\cref{lem:ipl-truth-lupset} by showing that truth sets for formulas in
\CM are indeed localized up\hyp{}sets, using the modal conditions for
the case of modal formulas~$\MonPr{A}$.
We then prove soundness for \CM by induction on the derivations of
judgments, as in the proof of \cref{prop:ipl-sound}.
The trouble, however, lies in proving completeness.
Constructing a canonical relational cover model of \CM
requires us to extend the cover system in \cref{def:cov-sys-ipl} with
a relation on contexts.
A natural candidate for a relation in the canonical model~$\Mod[N]$
would be the relation~${\RCM} \subseteq \Ctx \times \Ctx$ defined as
follows:
$$\Gamma \RCM \Delta\
\text{iff there exists a formula}~A\ \text{s.t.}\ \Gamma \in
\AntTh{{(\MonPr{A})}}\ \text{and}\ \Delta = \{A\}$$
The relation~$\RCM$ has the essential character of equating entailment
in the model with provability.
We can show that the truth set~$\truth{\MonPr{B}}$ determined by the
relation~$\RCM$ is in fact equivalent to the
set~$\AntTh{(\MonPr{B})}$.
However, $\RCM$ crucially fails to satisfy modal localization,
blocking us from using it to construct a relational cover
model---inhibiting a proof of the truth lemma
(\cref{lem:ipl-truth-lemma}) used to show completeness.
Goldblatt~\cite[Section 8]{Goldblatt11a} encounters a similar
roadblock in an attempt to construct a Henkin\hyp{}style model of
\LL.

At first sight, it may appear as though the modal localization
condition is at fault.
However, the modal localization condition on relational cover systems
simply states what is required to show that truth sets of modal
formulas satisfy the localization property.
Based on observations from the model constructions to follow in this
article, our experience suggests that the interpretation of modal
formulas in relational cover semantics is itself somewhat restrictive.
An accessibility relation forces us to choose \emph{exactly one}
possible future world to witness the truth of a modal formula, while
many models necessitate a \emph{collection} of possible future worlds.
In the upcoming sections, we develop a conservative extension of
relational cover semantics by replacing the accessibility
relation~$\Rm$ with a modal covering relation~$\Covm{}{}$ to alleviate
this restriction.

\subsection{Modal Cover Semantics for IMLs}
\label{sec:overview:reg-cover}

A \emph{modal cover system}~$\tuple{C,\Covm{}{}}$ extends the
definition of a cover system~$C = \tuple{W,\Ri{}{},\Covp{}{}}$ with a
\emph{modal covering} relation~$\Covm{}{}$ subject to the modal
refinability and localization conditions stated below.
We may intuitively understand a modal cover~$\beta$ of a world~$w$,
written $\Covm{w}{\beta}$ or $\CovmOp{\beta}{w}$, as defining a
``speculation'' about possible ``future'' states of knowledge based on
``current'' knowledge at $w$.
To state the modal conditions, we define two operators~$\jOp{}$
and~$\mOp{}$ on subsets of $W$, for a given~$X \subseteq W$, as:
\begin{align*}
  \jOp{X} = \{ w \in W\ |\ \exists \alpha.\, \Covp{w}{\alpha} \subseteq X\} \qquad
  \mOp{X} = \{ w \in W\ |\ \exists \alpha.\, \Covm{w}{\alpha} \subseteq X\}
\end{align*}
The set~$\jOp{(X)}$ identifies all worlds~$w$ locally covered by some
subset~$\alpha$ of $X$, while the set~$\mOp{X}$ identifies all
worlds~$w$ modally covered by some subset~$\alpha$ of $X$.
The modal conditions are:
\begin{itemize}
\item \emph{Modal Refinement}: If $\RiOp{w'}{w} \Covm{}{} \alpha$,
  then there exists an $\alpha'$ such that
  $\Covm{w'}{\alpha'} \RinOp{}{} \alpha$.
\item \emph{Modal Localization}: If
  $\Covp{w}{\alpha} \subseteq \mOp{X}$, then there exists a
  $\beta$ such that $\Covm{w}{\beta} \subseteq \jOp{X}$
\end{itemize}
The modal conditions on modal cover systems generalize those on
relational cover systems by respectively replacing the accessibility
relation~$\Rm$ and operator~$\rOp$ with the modal covering
relation~$\Covm{}{}$ and operator~$\mOp$.
As before, the modal conditions ensure that truth sets of modal
formulas are localized up\hyp{}sets.

%

A \emph{modal cover model}~$\Mod[M] = \triple{C}{\Covm{}{}}{V}$ of
\CM couples a modal cover system~$\tuple{C,\Covm{}{}}$ with a
valuation function~$V$ that maps atoms to localized up\hyp{}sets of
$W$, which is a function $V : \Atom \to \Pow{(W)}$ satisfying the
upper set and localization conditions imposed on a cover model of
\IPL.
Observe that there is no modal counterpart to the localization
condition on $V$ concerning the local covering relation~$\Covp{}{}$.
Intuitively, this is because we cannot expect a formula~$A$ that is
``speculatively true'' at a world~$w$ to become true at $w$.
The truth of \CM formulas is defined for a given modal cover
model~$\Mod[M]$ of \CM by extending the definition of the satisfaction
relation for \IPL to modal formulas as follows:
\begin{equation*}
  \begin{array}{l@{\;\Vdash\;} @{\;} l @{\;\text{iff}\;}c@{\;} l}
    \Mod[M],w & \MonPr{A} & & \exists \beta.\, \Covm{w}{\beta}\ \text{and}\ \forall v \in \beta.\, \Mod[M],v \Vdash A
  \end{array}
\end{equation*}
This definition states that a modal formula~$\MonPr{A}$ is true at a
world $w$ iff $A$ is true at all members~$v$ of some modal
cover~$\beta$ of $w$.
In contrast, recollect that the relational approach in the previous
subsection requires $A$ to be true at some world~$v$ accessible from
$w$.
This means we can recover the relational semantics for \CM from the
modal cover semantics for \CM by simply restricting modal covers to be
singletons.

\begin{lemma}\label[lemma]{lem:cm-truth-lupset}
  For any modal cover model~$\Mod[M]$ of \CM, the truth sets
  $\truth{A}^{\Mod[M]}$ and $\truth{\Gamma}^{\Mod[M]}$ are localized
  up\hyp{}sets.
\end{lemma}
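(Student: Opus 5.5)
We argue by induction on the formula~$A$, extending the proof of \cref{lem:ipl-truth-lupset}. The \IPL connectives are handled exactly as before: refinement gives upward closure for $\BotPr$ and $A \OrPr B$, inclusion gives localization for $A \ImpPr B$, and transitivity gives localization for $\BotPr$ and $A \OrPr B$. The new ingredient is the modal formula~$\MonPr{A}$. The statement for contexts then follows by induction on $\Gamma$, since finite intersections of localized up\hyp{}sets are again localized up\hyp{}sets (the empty intersection $W$ trivially so).

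\emph{Upward closure for $\MonPr{A}$.} Suppose $\Ri{w}{w'}$ and $w \Vdash \MonPr{A}$, witnessed by some $\Covm{w}{\beta}$ with $\beta \subseteq \truth{A}$. By modal refinement we obtain $\beta'$ with $\Covm{w'}{\beta'}$ and $\Rin{\beta}{\beta'}$. Thus every $v' \in \beta'$ refines some $v \in \beta$. Since $v \in \truth{A}$, the induction hypothesis (upward closure of $\truth{A}$) gives $v' \in \truth{A}$. Hence $\beta' \subseteq \truth{A}$, and so $w' \Vdash \MonPr{A}$.

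\emph{Localization for $\MonPr{A}$.} This is the step requiring care. First observe that $\truth{\MonPr{A}} = \mOp{\truth{A}}$ directly by the satisfaction clause. Now suppose $\Covp{w}{\alpha} \subseteq \truth{\MonPr{A}} = \mOp{\truth{A}}$. Modal localization with $X = \truth{A}$ yields $\beta$ with $\Covm{w}{\beta} \subseteq \jOp{\truth{A}}$. By the induction hypothesis, $\truth{A}$ is localized, i.e.\ $\jOp{\truth{A}} \subseteq \truth{A}$. Therefore $\beta \subseteq \truth{A}$, which gives $w \Vdash \MonPr{A}$.

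The main obstacle is simply noticing that the modal localization condition is phrased so that its conclusion lands in $\jOp{X}$ rather than $X$. The induction hypothesis that $\truth{A}$ is localized is exactly what closes this gap. For this reason both properties should be proved simultaneously within the same induction, so that localization of $\truth{A}$ is available in the $\MonPr{A}$ case.
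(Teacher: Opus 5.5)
Your proof is correct and takes the same approach as the paper. You run the induction of \cref{lem:ipl-truth-lupset} and, in the $\MonPr{A}$ case, use modal refinement plus the induction hypothesis for upward closure, and modal localization composed with $\jOp{\truth{A}} \subseteq \truth{A}$ for localization. This is exactly the argument the paper gives in \cref{prop:lupset-modal-operator}, where it is packaged as $\jOp{\mOp{X}} \subseteq \mOp{\jOp{X}} \subseteq \mOp{X}$.
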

\begin{proof}
  By repeating the induction in \cref{lem:ipl-truth-lupset}, using the
  modal conditions for modal formulas.
\end{proof}

\begin{proposition}[Soundness for \CM]\label[prop]{prop:cm-sound}
If \,$\Gamma \vdash A$, then $\Gamma \satisfies A$.
\end{proposition}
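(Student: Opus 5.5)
We proceed by induction on the derivation of $\Gamma \vdash A$, exactly as in the proof of \cref{prop:ipl-sound}, now working with modal cover models of \CM. For every rule we show that if the premises are entailed in an arbitrary modal cover model $\Mod[M]$, then so is the conclusion. By \cref{lem:cm-truth-lupset}, all truth sets $\truth{A}$ and $\truth{\Gamma}$ are localized up\hyp{}sets, so the \IPL cases carry over verbatim.

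For \textsc{Hyp}, $\TopPr$ and the $\AndPr$ rules the argument is immediate. For $\ImpPr$\nbhyp{}Intro we use that $\truth{\Gamma}$ is an up\hyp{}set. For $\ImpPr$\nbhyp{}Elim we instantiate the refinement at $w$ itself, using reflexivity of $\Ri{}{}$.

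The rules involving covers need more care. For $\BotPr$\nbhyp{}Elim, given $\Covp{w}{\emptyset}$ we have vacuously $\Covp{w}{\emptyset} \subseteq \truth{A}$, and localization of $\truth{A}$ gives $w \in \truth{A}$. For $\OrPr$\nbhyp{}Elim, given $w \in \truth{\Gamma}$ we obtain $\alpha$ with $\Covp{w}{\alpha}$ and each $v \in \alpha$ satisfying $A$ or $B$. Inclusion yields $\Ri{w}{v}$ for some $w$ related to $v$, so $v \in \truth{\Gamma}$ by the up\hyp{}set property. The induction hypotheses for the two branches then give $v \in \truth{C}$, and localization of $\truth{C}$ concludes. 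The $\OrPr$\nbhyp{}Intro rules use the identity cover $\Covp{w}{\{w\}}$.

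The only new case is \CM/$\MonPr$\nbhyp{}Mon. Assume $w \in \truth{\Gamma}$. The induction hypothesis on the first premise gives $w \Vdash \MonPr{A}$, i.e.\ some $\beta$ with $\Covm{w}{\beta} \subseteq \truth{A}$. The induction hypothesis on the second premise $A \vdash B$ gives $\truth{A} \subseteq \truth{B}$. Hence $\Covm{w}{\beta} \subseteq \truth{B}$, so $w \Vdash \MonPr{B}$. Notably this case needs neither modal condition: those were already consumed in \cref{lem:cm-truth-lupset}.

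Thus the main obstacle, the localization and up\hyp{}set properties of modal truth sets, is already discharged by \cref{lem:cm-truth-lupset}. What remains is bookkeeping in the $\OrPr$\nbhyp{}Elim case, namely transporting $\truth{\Gamma}$ along the inclusion condition.
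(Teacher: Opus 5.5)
Your proof is correct and is the same as the paper's: induction on the derivation, reusing the \IPL{} cases via \cref{lem:cm-truth-lupset}, with the single new case \CM/$\MonPr$\nbhyp{}Mon handled exactly as you do, by pushing the modal cover $\beta \subseteq \truth{A}$ through the hypothesis $\truth{A} \subseteq \truth{B}$. One wording fix in the $\OrPr$\nbhyp{}Elim case: inclusion gives $\Ri{w}{v}$ for every $v \in \alpha$ with the same $w$ you started from, not ``some $w$''.
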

\begin{proof}
  By repeating the induction in proof of \cref{prop:ipl-sound}, now using
  \cref{lem:cm-truth-lupset}.
  The interesting case is that of \emph{Rule
    \CM/$\MonPr{}$\nbhyp{}Mon}:
  We must show $\truth{\Gamma} \subseteq \truth{\MonPr{B}}$ from the
  induction hypotheses~$\truth{\Gamma} \subseteq \truth{\MonPr{A}}$
  (IH.1) and $\truth{A} \subseteq \truth{B}$ (IH.2).
  If some~$w \in \truth{\Gamma}$, it follows from IH.1 that
  $w \in \truth{\MonPr{A}}$, which means for some~$\beta$,
  $\Covm{w}{\beta}$ and $\beta \subseteq \truth{A}$.
  It follows from IH.2 that $\beta \subseteq \truth{B}$, which means
  we also have $w \in \truth{\MonPr{B}}$ as desired.
\end{proof}

To prove completeness for \CM, let us define a modal cover
system~$C_{\text{\CM}} = \tuple{C_{\text{\IPL}},\CovmCM{}{}}$ coupling
the cover system~$C_{\text{\IPL}}$ (reproducing \cref{def:cov-sys-ipl}
in the language of \CM) with the modal covering
relation~$\CovmCM{}{} \subseteq \Ctx \times \Pow{(\Ctx)}$ defined
inductively below.
The relation~$\CovmCM{}{}$ can be verified to satisfy the modal
refinability and localization conditions by induction on its
definition.
\begin{mathpar}
  \inferrule[]{%
    \Gamma \in \AntTh{(\MonPr{A})}
  }{%
    \CovmCM{\Gamma}{\{A\}}
  }

  \inferrule[]{%
    \Gamma \in \AntTh{(\BotPr)}}{%
    \CovmCM{\Gamma}{\emptyset}
  }

  \inferrule[]{%
    \Gamma \in \AntTh{(A \OrPr B)}\\
    \CovmCM{\ExtCtx{\Gamma}{A}}{\alpha_1} \\
    \CovmCM{\ExtCtx{\Gamma}{B}}{\alpha_2} \\
  }{%
    \CovmCM{\Gamma}{\alpha_1 \union \alpha_2}
  }
\end{mathpar}
Observe that there is an overlap in the definitions of the modal
($\CovmCM{}{}$) and local ($\CovpIPL{}{}$) covering relations used to
define $C_{\text{\CM}}$.
This overlap allows us to show that the modal localization condition
holds for $C_{\text{\CM}}$.
\begin{lemma}[Truth Lemma]\label[lem]{lem:cm-truth-lemma}
  The tuple~$\Mod[N] = \tuple{C_{\text{\CM}},\AntTh}$ is a modal
  cover model of \CM with the characteristic property that, for every
  formula~$A$ in \CM, we have $\Gamma \in \truth{A}^{\Mod[N]}$ if and
  only if\, $\Gamma \in \AntTh{(A)}$.
\end{lemma}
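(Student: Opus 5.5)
The proof has two parts, following \cref{lem:ipl-truth-lemma}. First we check that $\Mod[N]$ is a modal cover model of \CM. The \IPL part of $C_{\text{\CM}}$ is as before. The valuation $\AntTh{(p)}$ is a localized up\hyp{}set: upward closure is weakening, and localization goes by induction on $\CovpIPL{}{}$ (identity case trivial, $\BotPr$ case by $\BotPr$\nbhyp{}Elim, disjunction case by $\OrPr$\nbhyp{}Elim applied to the two induction hypotheses). Modal refinement for $\CovmCM{}{}$ is an induction on its derivation. Each rule is stable under weakening $\Gamma \subseteq \Gamma'$, and a singleton $\{A\}$ refines itself. In the disjunction case we use the extended contexts $\ExtCtx{\Gamma'}{A}$ and $\ExtCtx{\Gamma'}{B}$ and the induction hypotheses.

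The main obstacle is modal localization. Assume $\CovpIPL{\Gamma}{\alpha}$ with $\alpha \subseteq \mOp{X}$; we must produce $\beta$ with $\CovmCM{\Gamma}{\beta} \subseteq \jOp{X}$. The approach is induction on $\CovpIPL{\Gamma}{\alpha}$, exploiting that the two covering relations share their $\BotPr$ and $\OrPr$ rules.
\begin{itemize}
\item Identity case ($\alpha = \{\Gamma\}$): we have $\Gamma \in \mOp{X}$, so $\CovmCM{\Gamma}{\beta} \subseteq X$ for some $\beta$. Since $X \subseteq \jOp{X}$ by the identity rule of $\CovpIPL{}{}$, we get $\beta \subseteq \jOp{X}$.
\item $\BotPr$ case: take $\beta = \emptyset$ via the $\BotPr$ rule of $\CovmCM{}{}$.
\item $\OrPr$ case ($\alpha = \alpha_1 \union \alpha_2$): the induction hypotheses give $\CovmCM{\ExtCtx{\Gamma}{A}}{\beta_1} \subseteq \jOp{X}$ and $\CovmCM{\ExtCtx{\Gamma}{B}}{\beta_2} \subseteq \jOp{X}$. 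The $\OrPr$ rule of $\CovmCM{}{}$ then gives $\CovmCM{\Gamma}{\beta_1 \union \beta_2} \subseteq \jOp{X}$.
\end{itemize}
So the overlap of the definitions does all the work. The only subtlety is stating the induction generally in $X$.

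For the characteristic property we induct on $A$. The \IPL connectives are handled exactly as in \cref{lem:ipl-truth-lemma}. The two halves of the equivalence are proved mutually, with the $\ImpPr$ case using $\Gamma, A \in \truth{A}$, and the $\OrPr$ and $\BotPr$ cases using a lemma: if $\CovpIPL{\Gamma}{\alpha} \subseteq \AntTh{(C)}$ then $\Gamma \in \AntTh{(C)}$, proved by induction on the cover.

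For $\MonPr{A}$, right to left is immediate: from $\Gamma \vdash \MonPr{A}$ we get $\CovmCM{\Gamma}{\{A\}}$, and $\{A\} \in \truth{A}$ by the induction hypothesis applied to $A \vdash A$.

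For left to right we need an auxiliary lemma: if $\CovmCM{\Gamma}{\beta}$ and $\beta \subseteq \AntTh{(B)}$, then $\Gamma \vdash \MonPr{B}$. It is proved by induction on $\CovmCM{}{}$.
\begin{itemize}
\item Singleton case: from $\Gamma \vdash \MonPr{A}$ and $A \vdash B$, apply \CM/$\MonPr$\nbhyp{}Mon.
\item $\BotPr$ case: use $\BotPr$\nbhyp{}Elim.
\item $\OrPr$ case: combine the induction hypotheses with $\OrPr$\nbhyp{}Elim.
\end{itemize}
Applying this lemma with the induction hypothesis $\truth{A} \subseteq \AntTh{(A)}$ closes the modal case.
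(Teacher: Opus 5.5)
Your proposal is correct and follows the paper's proof. It uses the same induction on formulas as for \IPL. In the modal case $\MonPr{A}$, the right-to-left direction uses the singleton cover $\{A\}$, and the left-to-right direction uses an auxiliary induction on $\CovmCM{}{}$ showing that a modal cover contained in $\AntTh{(B)}$ yields $\Gamma \vdash \MonPr{B}$. Your explicit inductions for modal refinement and modal localization, which exploit the shared $\BotPr$ and $\OrPr$ rules of $\CovpIPL{}{}$ and $\CovmCM{}{}$, fill in exactly the verifications the paper only asserts.
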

\begin{proof}
  By repeating the induction on formulas in
  \cref{lem:ipl-truth-lemma}.
  The interesting case is that of modal formulas~$\MonPr{B}$.
  From left to right: if $\Gamma \in \truth{\MonPr{B}}$ then for
  some~$\beta$, $\CovmCM{\Gamma}{\beta}$ and
  $\beta \subseteq \truth{B}$.
  By applying the induction hypothesis on $B$, we know
  $\truth{B} \subseteq \AntTh{(B)}$, which means
  $\beta \subseteq \AntTh{(B)}$.
  By induction on the relation~$\CovmCM{}{}$, we can show that
  $\CovmCM{\Gamma}{\beta} \subseteq \AntTh{(B)}$ implies
  $\Gamma \in \AntTh{(\MonPr{B})}$ as desired.
  From right to left: if $\Gamma \in \AntTh{(\MonPr{B})}$, then
  $\CovmCM{\Gamma}{\{B\}}$.
  By applying the IH once again on $B$, we know
  $\AntTh{(B)} \subseteq \truth{B}$, which means we have
  $\{B\} \subseteq \truth{B}$ since $\{B\} \subseteq \AntTh{(B)}$.
  Altogether $\CovmCM{\Gamma}{\{B\}} \subseteq \truth{B}$ and thus
  $\Gamma \in \truth{\MonPr{B}}$.  
\end{proof}

\begin{theorem}[Completeness for \CM]\label[thm]{thm:cm-complete}
If \,$\Gamma \satisfies A$, then $\Gamma \vdash A$.
\end{theorem}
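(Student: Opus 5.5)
The plan is to follow the proof of \cref{thm:ipl-complete} essentially verbatim, replacing the canonical cover model of \IPL by the canonical modal cover model $\Mod[N] = \tuple{C_{\text{\CM}},\AntTh}$ and \cref{lem:ipl-truth-lemma} by \cref{lem:cm-truth-lemma}.

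\textbf{Step 1: the canonical context satisfies itself.} We first show $\Gamma \vdash \bigwedge \Gamma$, that is, $\Gamma \in \AntTh{(\bigwedge \Gamma)}$. This goes by induction on $\Gamma$. For the empty context we use $\TopPr$\nbhyp{}Intro. For $\ExtCtx{\Gamma}{B}$ we use Hyp together with $\AndPr$\nbhyp{}Intro; the induction hypothesis must first be weakened from $\Gamma$ to $\ExtCtx{\Gamma}{B}$. Weakening is admissible in the \CM system. This is routine for the \IPL rules. The only new case is \CM/$\MonPr$\nbhyp{}Mon, and it is harmless, because its second premise $A \vdash B$ has a fixed context and so is left untouched by weakening.

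\textbf{Step 2: transfer to truth.} Applying \cref{lem:cm-truth-lemma} from right to left gives $\Gamma \in \truth{\bigwedge \Gamma}^{\Mod[N]}$. Unfolding the satisfaction clauses for $\TopPr$ and $\AndPr$ shows $\truth{\bigwedge \Gamma}^{\Mod[N]} = \truth{\Gamma}^{\Mod[N]}$. Hence $\Gamma \in \truth{\Gamma}^{\Mod[N]}$.

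\textbf{Step 3: use the hypothesis.} By \cref{lem:cm-truth-lemma}, $\Mod[N]$ is a genuine modal cover model of \CM. So the assumption $\Gamma \satisfies A$ specializes to $\truth{\Gamma}^{\Mod[N]} \subseteq \truth{A}^{\Mod[N]}$, which gives $\Gamma \in \truth{A}^{\Mod[N]}$. Applying \cref{lem:cm-truth-lemma} once more, now from left to right, yields $\Gamma \in \AntTh{(A)}$, i.e.\ $\Gamma \vdash A$.

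The only potential obstacle is the use of weakening in Step~1. All the substantive work, namely that $\CovmCM{}{}$ satisfies modal refinement and modal localization, is already packaged into \cref{lem:cm-truth-lemma}, which we assume.
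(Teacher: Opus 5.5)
Your proposal is correct and matches the paper's proof, which simply repeats the argument of \cref{thm:ipl-complete} with \cref{lem:cm-truth-lemma} in place of \cref{lem:ipl-truth-lemma}, exactly as you do. Your weakening remark is sound, since the fixed premise $A \vdash B$ of \CM/$\MonPr$-Mon makes weakening admissible. You could also avoid weakening altogether by deriving each member of $\Gamma$ directly in context $\Gamma$ via Hyp and combining them with $\AndPr$-Intro.
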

\begin{proof}
  By repeating the argument in \cref{thm:ipl-complete}, now using
  \cref{lem:cm-truth-lemma}.
\end{proof}

\section{Semantic Analysis of Modal Cover Models}

The operators~$\jOp$ and $\mOp$ defined in the previous section
possess a number of general algebraic properties that make it possible
to avoid repetition in the proofs of soundness and completeness for
various IMLs with respect to their modal cover semantics.
We will identify these properties in this section.

Given a cover system~$C = \tuple{W,\Ri{}{},\Covp{}{}}$, recollect that
an \emph{upper set} or \emph{up\hyp{}set} is an ``upwards closed''
subset~$X \subseteq W$ with the property that if $\Ri{w}{w'}$ and $w
\in X$, then $w' \in X$.
Up\hyp{}sets can be characterized using an operator~$\uOp$ defined on
subsets of $W$
as~$\uOp{X} = \{ w \in W\ |\ \exists x \in X.\, \Ri{x}{w} \}$.
The set~$\uOp{X}$ identifies all worlds~$w$ that refine some world~$x$
in $X$, and $X$ is an up\hyp{}set if and only if $\uOp{X} = X$.
Recollect similarly that a \emph{localized} or \emph{localizing} set,
is a subset~$X \subseteq W$ with the property that if
$\exists \alpha.\, \Covp{w}{\alpha} \subseteq X$, then $w \in X$.
Localized sets can be characterized using the
operator~$\jOp{X} = \{ w \in W\ |\ \exists \alpha.\, \Covp{w}{\alpha}
\subseteq X\}$ from earlier.
The set~$\jOp{X}$ identifies all worlds~$w$ locally covered by some
subset~$\alpha$ of $X$, and a set~$X$ is a localized set if and only
if $\jOp{X} \subseteq X$.
We refer to an up\hyp{}set as a \emph{localized up\hyp{}set} if it is
also localized.
We write~$\Up{(W)}$ to denote the collection of all up\hyp{}sets
and~$\LUp{(W)}$ to denote the collection of all localized
up\hyp{}sets.

\begin{proposition}\label[prop]{prop:jop-properties}
  The operator~$\jOp : \Pow{(W)} \to \Pow{(W)}$ exhibits the following properties:
\begin{enumerate}
\item $\jOp$ is a nucleus on subsets of $W$, i.e. it is monotone
  (preserves $\subseteq$) and for all $X,Y \in \Pow{(W)}$,
  $$X \cap \jOp{Y}\subseteq \jOp{(X \cap Y)} \qquad
  X \subseteq \jOp{X} \qquad \jOp{\jOp{X}} \subseteq \jOp{X}$$
\item $\jOp$ is a nucleus on up\hyp{}sets of $W$, i.e.
  $\jOp : \Up{(W)} \to \Up{(W)}$ is a nucleus
\item $\jOp$ is a nucleus on localized up\hyp{}sets $W$, i.e. $\jOp : \LUp{(W)} \to \LUp{(W)}$ is a nucleus
\end{enumerate}
\end{proposition}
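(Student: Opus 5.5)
The plan is to verify each nucleus law directly from the cover system conditions, establishing (1) first and then obtaining (2) and (3) by showing that $\jOp$ preserves the relevant subcollections. The laws in (2) and (3) are the same inclusions as in (1), merely restricted to a smaller domain.

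For (1), I will take the four properties in turn.
\begin{itemize}
\item \textbf{Monotonicity.} If $X \subseteq Y$ and $\Covp{w}{\alpha} \subseteq X$, then $\alpha \subseteq Y$.
\item \textbf{Inflation} ($X \subseteq \jOp{X}$). This is immediate from the identity condition: $\Covp{w}{\{w\}}$ and $\{w\} \subseteq X$ whenever $w \in X$.
\item \textbf{Idempotence} ($\jOp{\jOp{X}} \subseteq \jOp{X}$). Suppose $\Covp{w}{\alpha}$ with every $v \in \alpha$ in $\jOp{X}$. Each such $v$ then has some $\alpha_v$ with $\Covp{v}{\alpha_v} \subseteq X$. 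By transitivity, $\Covp{w}{\Union_{v\in\alpha}\alpha_v}$, and this union is contained in $X$.
\item \textbf{Meet law} ($X \cap \jOp{Y} \subseteq \jOp{(X\cap Y)}$). Take $w \in X$ with $\Covp{w}{\alpha} \subseteq Y$. We want a cover of $w$ contained in $X \cap Y$, and the only tool that links $\alpha$ back to $w$ is the inclusion condition: every $v\in\alpha$ refines $w$. This tells us $v \in X$ only if $X$ is an up-set. So on arbitrary subsets I would either restrict the statement to up-sets $X$, or check whether an extra assumption is intended.
\end{itemize}

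The meet law is the main obstacle. As stated in (1) for arbitrary $X$, it appears to fail without upward closure. For instance, take $W=\{a,b\}$ with $a \le b$, the covers $\Covp{a}{\{b\}}$ together with the identity covers, $X=\{a\}$ and $Y=\{b\}$. Then $a \in X \cap \jOp{Y}$, yet $X\cap Y=\emptyset$ and $a$ has no empty cover. I would therefore prove the meet law for up-sets $X$, which is the case that matters for (2) and (3), and note that item (1) needs this reading (or only the monotone/closure part) on plain subsets.

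For (2), it suffices to show that $\jOp{X}$ is an up-set whenever $X$ is. The nucleus laws are then inherited from (1), with the meet law now valid because $X$ is an up-set. Given $w \le w'$ and $\Covp{w}{\alpha}\subseteq X$, the refinement condition yields $\alpha'$ with $\Covp{w'}{\alpha'}$ and $\Rin{\alpha}{\alpha'}$. Every $v'\in\alpha'$ therefore refines some $v\in\alpha\subseteq X$, and upward closure of $X$ gives $\alpha'\subseteq X$. Hence $w'\in\jOp{X}$.

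For (3), I additionally need $\jOp{X}$ to be localized, i.e.\ $\jOp{\jOp{X}}\subseteq\jOp{X}$, which is exactly idempotence from (1). In fact, for $X$ localized we get $\jOp{X}=X$ by inflation together with localization. So $\jOp$ acts as the identity on $\LUp{(W)}$, which is trivially a nucleus; I would record both viewpoints.
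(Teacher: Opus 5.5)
Your proof follows the paper's argument law for law: identity for inflation, transitivity for idempotence, the inclusion condition (which the paper's proof calls `reachability') for the meet law, refinement for (2), and idempotence for (3). Your objection to (1) is also sound, since your two-point system does satisfy refinement, inclusion, identity and transitivity, yet $a \in X \cap \jOp{Y}$ while $\jOp{(X \cap Y)} = \jOp{\emptyset} = \emptyset$; inclusion only places the members of a cover of $w$ above $w$, so the paper's appeal to it yields $X \cap \jOp{Y} \subseteq \jOp{(X \cap Y)}$ only when $X$ is an up-set, and your restriction is exactly the right repair, costing nothing because the rest of the paper only needs the meet law on up-sets.
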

\begin{proof}
  For property (i), monotonicity follows from definition of $\jOp$,
  while the inequalities follow respectively from the reachability,
  identity and transitivity conditions.
  For property (ii), we use the refinement condition to show
  that~$\jOp{X}$ must be an up\hyp{}set if $X$ is.
  %
  For property (iii), observe from (i) that $\jOp{\jOp{X}} \subseteq
  \jOp{X}$ for any subset~$X$, and thus $\jOp{X}$ is localizing for a
  localized upset\hyp{}set $X$.
\end{proof}

Recall that a Heyting algebra~$H = (U,\leq,\times,+,1,0,\Rightarrow)$
is a lattice~$(U,\leq,\times,+)$ consisting of a partial order~$\leq$
on a carrier set~$U$ with meet ($\times$) and join ($+$) operations,
accompanied by a maximal element~$1$, a minimal element~$0$, and an
operation~$\Rightarrow$ on $U$ such that $c \leq a \Rightarrow b$ if
and only if $c \times a \leq b$, for all elements~$a,b,c \in U$.
Further recall that an \emph{algebraic model}~$\Mod[A] = \tuple{H,V}$
of \IPL consists of a Heyting algebra~$H$ and valuation
function~$V : \Atom \to U$ mapping atoms to elements of the set~$U$.
For any algebraic model~$\Mod[A]$ of \IPL, we can extend the valuation
of atoms to give an \emph{interpretation} of
formulas~$\eval{-}^{\Mod[A]} : \Prop \to U$ and
contexts~$\eval{-}^{\Mod[A]} : \Ctx \to U$ in the carrier set~$U$ of
the underlying Heyting algebra~$H$ as follows:
\vspace{-\abovedisplayskip}
\begin{center}
\begin{minipage}{.2\linewidth}
\centering
\begin{equation*}
\begin{array}{>{\evallbracket\,}l@{\,\evalrbracket^{\Mod[A]}}l @{\;}c@{\;} l}
  p           && = & V(p) \\
  \TopPr     && = & 1 \\
  \BotPr    && = & 0
\end{array}
\end{equation*}
\end{minipage}%
\begin{minipage}{.4\linewidth}
\centering
\begin{equation*}
\begin{array}{>{\evallbracket\,}l@{\,\evalrbracket^{\Mod[A]}}l @{\;}c@{\;} l}
  A \AndPr B && = & \eval{A}^{\Mod[A]} \times \eval{B}^{\Mod[A]}\\
  A \OrPr B && = & \eval{A}^{\Mod[A]} + \eval{B}^{\Mod[A]}\\
  A \ImpPr B  && = & \eval{A}^{\Mod[A]} \Rightarrow \eval{B}^{\Mod[A]}
  \end{array}
\end{equation*}
\end{minipage}
\begin{minipage}{.3\linewidth}
\centering
\begin{equation*}
\begin{array}{>{\evallbracket\,}l@{\,\evalrbracket^{\Mod[A]}}l @{\;}c@{\;} l}
  \EmptyCtx && = & 1\\
  \Gamma, A && = & \eval{\Gamma}^{\Mod[A]} \times \eval{A}^{\Mod[A]}
\end{array}
\end{equation*}
\end{minipage}%
\end{center}
Moreover, it is well known that these functions are both sound and
complete, meaning a judgment~$\Gamma \vdash A$ is derivable in a proof
system for \IPL, if and only if, $\eval{\Gamma}^{\Mod[A]} \leq
\eval{A}^{\Mod[A]}$ holds for all algebraic models~$\Mod[A]$ of \IPL.

\begin{proposition}\label[prop]{prop:cs-to-ha}
  Every cover system~$C=\tuple{W,\Ri{}{},\Covp{}{}}$ determines a
  Heyting algebra~$\Psh[C]$ defined by taking:
\begin{itemize}
\item $\LUp(W)$ as the carrier ordered by set inclusion~$\subseteq$
\item $X \cap Y$ as the meet of localized up\hyp{}sets~$X$ and~$Y$
\item $\jOp{(X \cup Y)}$ as the join of  localized up\hyp{}sets~$X$ and~$Y$
\item $W$ as the maximal element and $\jOp{(\emptyset)}$ as the minimal element
\item $X \Rightarrow Y = \{ w\ | \uOp{\{w\}} \cap X \subseteq Y \}$ as
  the exponent of localized up\hyp{}sets~$X$ and~$Y$
\end{itemize}
\end{proposition}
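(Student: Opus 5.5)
We verify that each listed operation stays inside $\LUp(W)$ and has the required universal property with respect to $\subseteq$. Throughout we use \cref{prop:jop-properties}, in particular that $\jOp$ is a nucleus on up\hyp{}sets and maps $\LUp(W)$ into itself.

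\emph{Order and lattice structure.} Set inclusion is a partial order on $\LUp(W)$. For the meet, $X \cap Y$ is an up\hyp{}set. It is localized because $\jOp{(X \cap Y)} \subseteq \jOp{X} \cap \jOp{Y} \subseteq X \cap Y$, using monotonicity of $\jOp$. Since it is the set\hyp{}theoretic greatest lower bound, it is the meet. For the join, $X \cup Y$ is an up\hyp{}set, so $\jOp{(X\cup Y)}$ is a localized up\hyp{}set by \cref{prop:jop-properties}(ii),(iii), since $\jOp{\jOp{Z}}\subseteq\jOp{Z}$. It contains $X$ and $Y$ because $Z \subseteq \jOp{Z}$. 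If $Z \in \LUp(W)$ contains both, then $\jOp{(X \cup Y)} \subseteq \jOp{Z} \subseteq Z$ by monotonicity and localization, so it is the least upper bound. The top element $W$ is trivially a localized up\hyp{}set. The bottom element $\jOp{\emptyset}$ lies in $\LUp(W)$ by the same argument, and for any localized $Z$ we have $\jOp{\emptyset} \subseteq \jOp{Z} \subseteq Z$.

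\emph{Exponent.} Let $X \Rightarrow Y = \{w \mid \uOp{\{w\}} \cap X \subseteq Y\}$. It is an up\hyp{}set by transitivity of $\Ri{}{}$: if $\Ri{w}{w'}$, then $\uOp{\{w'\}} \subseteq \uOp{\{w\}}$.

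The main step is localization. Suppose $\Covp{w}{\alpha} \subseteq X \Rightarrow Y$, and let $\Ri{w}{w'}$ with $w' \in X$; we must show $w' \in Y$. By the refinement condition there is $\alpha'$ with $\Covp{w'}{\alpha'}$ and $\Rin{\alpha}{\alpha'}$. So each $v' \in \alpha'$ refines some $v \in \alpha$, and hence $v' \in X \Rightarrow Y$ because $X \Rightarrow Y$ is an up\hyp{}set. By the inclusion condition each $v' \in \alpha'$ refines $w'$, so $v' \in X$ because $X$ is an up\hyp{}set. Then $v' \in \uOp{\{v'\}} \cap X \subseteq Y$ by reflexivity. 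Hence $\Covp{w'}{\alpha'} \subseteq Y$, and localization of $Y$ gives $w' \in Y$.

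\emph{Adjunction.} We show $Z \cap X \subseteq Y$ iff $Z \subseteq X \Rightarrow Y$ for $Z \in \LUp(W)$. For the forward direction, take $w \in Z$ and $\Ri{w}{w'}$ with $w' \in X$. Since $Z$ is an up\hyp{}set, $w' \in Z \cap X \subseteq Y$. For the converse, take $w \in Z \cap X$. Then $w \in X \Rightarrow Y$ and $w \in \uOp{\{w\}} \cap X$ by reflexivity, so $w \in Y$.

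Distributivity need not be checked separately, because it follows from the existence of the exponent. This completes the Heyting algebra structure. We expect the localization of $X \Rightarrow Y$ to be the only nontrivial step, since it is the one place where the refinement and inclusion conditions interact.
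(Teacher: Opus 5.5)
Your proof is correct and takes essentially the same route as the paper, whose proof just appeals to the definitions and the nucleus properties of $\jOp$ in \cref{prop:jop-properties}; you have filled in the details it leaves implicit. Your direct argument for localization of $X \Rightarrow Y$ via the refinement and inclusion conditions is the unfolded form of two facts: $\jOp$ preserves up-sets, and $X \cap \jOp{Z} \subseteq \jOp{(X \cap Z)}$ for an up-set $X$. So it is the same argument.
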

\begin{proof}
  Using the relevant definitions and properties of the operator~$\jOp$
  in \cref{prop:jop-properties}.
\end{proof}

\begin{proposition}\label[prop]{prop:cm-to-am}
  Every cover model~$\Mod[M]=\tuple{C,V}$ of \IPL determines an
  equivalent algebraic model~$\Psh[\Mod[M]]=\tuple{\Psh[C],V}$ of \IPL
  such that $\truth{A}^{\Mod[M]} = \eval{A}^{\Psh[\Mod[M]]}$ and
  $\truth{\Gamma}^{\Mod[M]} = \eval{\Gamma}^{\Psh[\Mod[M]]}$ for all
  formulas~$A$ and contexts~$\Gamma$ in \IPL.
\end{proposition}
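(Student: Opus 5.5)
The plan is to show, by induction on the formula~$A$, that $\truth{A}^{\Mod[M]} = \eval{A}^{\Psh[\Mod[M]]}$, and then to deduce the statement for contexts by induction on~$\Gamma$. First, note that $\Psh[\Mod[M]]$ really is an algebraic model. By \cref{prop:cs-to-ha}, $\Psh[C]$ is a Heyting algebra with carrier $\LUp(W)$. The valuation~$V$ of a cover model sends each atom to a localized up\hyp{}set, by the upper set and localization conditions, so $V : \Atom \to \LUp(W)$ is a legitimate valuation into the carrier. \cref{lem:ipl-truth-lupset} guarantees that every truth set $\truth{A}^{\Mod[M]}$ lies in $\LUp(W)$, so comparing the two sides inside the carrier makes sense.

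The cases of the induction on~$A$ go as follows.
\begin{itemize}
\item For atoms, both sides are $V(p)$.
\item For $\TopPr$, both sides are~$W$.
\item For $\AndPr$, both sides are $\truth{A}\cap\truth{B}$, using the induction hypotheses.
\item For $\BotPr$, I must show $\{w \mid \Covp{w}{\emptyset}\} = \jOp{(\emptyset)}$. This holds because $\Covp{w}{\alpha}$ with $\alpha \subseteq \emptyset$ forces $\alpha = \emptyset$.
\item For $\OrPr$, the induction hypotheses reduce the goal to $\truth{A \OrPr B} = \jOp{(\truth{A} \cup \truth{B})}$. Unfolding, $w \in \jOp{(X \cup Y)}$ iff some $\alpha$ with $\Covp{w}{\alpha}$ has every $v\in\alpha$ in $X$ or in $Y$, which is exactly the satisfaction clause for disjunction.
\item For $\ImpPr$, I must show that $\{w \mid \forall w' \RiOp{}{} w.\ w' \in \truth{A} \Rightarrow w' \in \truth{B}\}$ equals $\{w \mid \uOp{\{w\}} \cap \truth{A} \subseteq \truth{B}\}$. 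This is immediate, since $\uOp{\{w\}}$ is precisely the set of refinements of~$w$.
\end{itemize}

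For contexts, the empty context has truth set~$W$, which is the top element~$1$. The extended context $\Gamma, A$ has truth set $\truth{\Gamma} \cap \truth{A}$, which matches $\eval{\Gamma}^{\Psh[\Mod[M]]} \times \eval{A}^{\Psh[\Mod[M]]}$ by the induction hypothesis on~$\Gamma$ and the formula case. Intersection is associative and commutative, so the order in which the multiset is listed does not matter.

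I do not expect a real obstacle here. Every case is a definitional unfolding, because the Heyting operations in \cref{prop:cs-to-ha} were chosen to mirror the satisfaction clauses. The only point requiring care is that the equations take place in $\LUp(W)$ rather than $\Pow(W)$. For that reason \cref{lem:ipl-truth-lupset} must be invoked, so that each $\truth{A}$ is a legitimate element to which the operations of $\Psh[C]$ apply.
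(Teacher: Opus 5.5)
Your proposal is correct and follows the same route as the paper. Like the paper, you observe that $V$ lands in $\LUp{(W)}$, so \cref{prop:cs-to-ha} yields an algebraic model, and then check the equalities by a routine induction on $A$ and $\Gamma$, where each Heyting operation unfolds to the corresponding satisfaction clause. Your appeal to \cref{lem:ipl-truth-lupset} is harmless but unnecessary: the induction compares subsets of $W$, and each truth set belonging to $\LUp{(W)}$ follows from the proved equation itself, so the proposition in fact reproves that lemma.
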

\begin{proof}
  It follows from \cref{prop:cs-to-ha} and the conditions
  on the function~$V$ (in a cover model of \IPL) that $\Psh[\Mod[M]]$
  is indeed an algebraic model of \IPL.
  This means we obtain an interpretation of a formula~$A$ as a
  localized up\hyp{}set~$\eval{A}^{\Psh[\Mod[M]]}$.
  This interpretation can be given explicitly by induction on $A$, and
  extended to a context~$\Gamma$, as follows:
\vspace{-\abovedisplayskip}
\begin{center}
\begin{minipage}{.2\linewidth}
\centering
\begin{equation*}
\begin{array}{>{\evallbracket\,}l@{\,\evalrbracket^{\Psh[\Mod[M]]}}l @{\;}c@{\;} l}
  p           && = & V(p) \\
  \TopPr     && = & W \\
  \BotPr    && = & \jOp{(\emptyset)}
\end{array}
\end{equation*}
\end{minipage}%
\begin{minipage}{.4\linewidth}
\centering
\begin{equation*}
\begin{array}{>{\evallbracket\,}l@{\,\evalrbracket^{\Psh[\Mod[M]]}}l @{\;}c@{\;} l}
  A \AndPr B && = & \eval{A}^{\Psh[\Mod[M]]} \cap \eval{B}^{\Psh[\Mod[M]]}\\
  A \OrPr B && = & \jOp{(\eval{A}^{\Psh[\Mod[M]]} \cup \eval{B}^{\Psh[\Mod[M]]})}\\
  A \ImpPr B  && = & \eval{A}^{\Psh[\Mod[M]]} \Rightarrow \eval{B}^{\Psh[\Mod[M]]}
  \end{array}
\end{equation*}
\end{minipage}
\begin{minipage}{.3\linewidth}
\centering
\begin{equation*}
\begin{array}{>{\evallbracket\,}l@{\,\evalrbracket^{\Psh[\Mod[M]]}}l @{\;}c@{\;} l}
  \EmptyCtx && = & W\\
  \Gamma, A && = & \eval{\Gamma}^{\Psh[\Mod[M]]} \cap \eval{A}^{\Psh[\Mod[M]]}
\end{array}
\end{equation*}
\end{minipage}%
\end{center}
It can readily observed by induction that for any $A$ and $\Gamma$ in
\IPL, $\truth{A}^{\Mod[M]} = \eval{A}^{\Psh[\Mod[M]]}$ and
$\truth{\Gamma}^{\Mod[M]} = \eval{\Gamma}^{\Psh[\Mod[M]]}$.
\end{proof}

%


We now turn our attention to the \emph{modal} operator~$\mOp$.
Given a modal cover system~$\tuple{C,\Covm{}{}}$, recollect that
the operator~$\mOp$ is defined on subsets of $W$ as
$\mOp{X} = \{ w \in W\ |\ \exists \alpha.\, \Covm{w}{\alpha} \subseteq
X\}$.
\begin{proposition}\label[prop]{prop:lupset-modal-operator}
  The operator~$\mOp$ is a monotone
  function~$\mOp : \LUp{(W)} \to \LUp{(W)}$ on localized up\hyp{}sets
\end{proposition}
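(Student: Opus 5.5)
We need to show three things: that $\mOp$ is monotone on arbitrary subsets, that it sends up\hyp{}sets to up\hyp{}sets, and that it sends localized up\hyp{}sets to localized sets. Monotonicity is immediate from the definition: if $X \subseteq Y$ and $\Covm{w}{\alpha} \subseteq X$, then the same $\alpha$ witnesses $\Covm{w}{\alpha} \subseteq Y$, so $\mOp{X} \subseteq \mOp{Y}$.

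For the up\hyp{}set property, let $X$ be an up\hyp{}set, $w \in \mOp{X}$ and $\Ri{w}{w'}$. Choose $\alpha$ with $\Covm{w}{\alpha} \subseteq X$. By the modal refinement condition there is an $\alpha'$ with $\Covm{w'}{\alpha'}$ and $\Rin{\alpha}{\alpha'}$. Every $v' \in \alpha'$ therefore refines some $v \in \alpha \subseteq X$, and since $X$ is an up\hyp{}set we get $v' \in X$. Hence $\alpha' \subseteq X$ and $w' \in \mOp{X}$. This is the same argument that gives property (ii) of \cref{prop:jop-properties}, with the refinement condition replaced by modal refinement.

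For localization, let $X$ be a localized up\hyp{}set and suppose $\Covp{w}{\alpha} \subseteq \mOp{X}$. The modal localization condition yields a $\beta$ with $\Covm{w}{\beta} \subseteq \jOp{X}$. Since $X$ is localized, $\jOp{X} \subseteq X$, so $\beta \subseteq X$ and $w \in \mOp{X}$. Thus $\jOp{\mOp{X}} \subseteq \mOp{X}$.

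None of these steps is a real obstacle. The only point needing care is the last one: modal localization deliberately delivers $\beta \subseteq \jOp{X}$ rather than $\beta \subseteq X$, and the gap is closed exactly by the hypothesis that $X$ is localized. This is why the statement is restricted to $\LUp{(W)}$ and not made for all of $\Up{(W)}$.
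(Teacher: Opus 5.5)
Your proof is correct and follows the paper's argument step for step. Monotonicity is immediate, the up-set property follows from modal refinement together with upward closure of $X$, and localization follows from modal localization combined with $\jOp{X} \subseteq X$; the paper merely phrases that last step as the chain $\jOp{\mOp{X}} \subseteq \mOp{\jOp{X}} \subseteq \mOp{X}$ via monotonicity of $\mOp$, which is your pointwise argument written at the level of operators.
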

\begin{proof}
  While monotonicity holds readily, we must show $\mOp{X}$ is a
  localized up\hyp{}set whenever $X$ is.

  To show $\mOp{X}$ is an up\hyp{}set, suppose $\Ri{w}{w'}$ and
  $w \in \mOp{X}$.
  This means for some~$\beta$,
  $w' \RiOp{}{} \Covm{w}{\beta} \subseteq X$.
  Due to the modal refinement condition, we know that for some
  $\beta'$, we have $\Covm{w'}{\beta'} \RinOp{}{} \beta$.
  Since~$X$ is an up\hyp{}set and $\beta'$ refines $\beta$, we also
  have $\beta' \subseteq X$, and thus $\Covm{w'}{\beta'} \subseteq X$,
  which is why $w' \in \mOp{X}$.
    
  To show $\mOp{X}$ is localizing, recollect that the modal
  localization effectively
  states~$\jOp{\mOp{X}} \subseteq \mOp{\jOp{X}}$.
  Since X is localizing, we know $\jOp{X} \subseteq X$, which implies
  $\jOp{\mOp{X}} \subseteq \mOp{X}$ since $\mOp$ is monotonic.
\end{proof}
A \emph{modal Heyting algebra}~$\tuple{H,m}$ is a Heyting algebra~$H$
accompanied by a monotone function~$m : U \to U$ on the carrier
set~$U$ underlying H.
An algebraic model~$\Mod[A] = \triple{H}{m}{V}$ of \CM consists
of a modal Heyting algebra~$\tuple{H,m}$ and a valuation
function~$V : \Atom \to U$.
As before, it can be shown that interpretation of
formulas~$\eval{-}^{\Mod[A]} : \Prop \to U$ and
contexts~$\eval{-}^{\Mod[A]} : \Ctx \to U$ in \CM are sound and
complete for the judgments~$\Gamma \vdash A$ in \CM by taking
$\eval{\MonPr{A}}^{\Mod[A]} = m(\eval{A}^{\Mod[A]})$ for the case of
modal formulas~$\MonPr{A}$ in \CM.

\begin{proposition}\label[prop]{prop:lupset-modal-algebra}
  Every modal cover model~$\Mod[M]=\triple{C}{\Covm{}{}}{V}$ of \CM
  determines an equivalent algebraic model
  $\Psh[\Mod[M]]=\triple{\Psh[C]}{\mOp}{V}$ of \CM, with $\Psh[C]$ as
  in \cref{prop:cm-to-am}, s.t.
  $\truth{A}^{\Mod[M]} = \eval{A}^{\Psh[\Mod[M]]}$ for all \CM
  formulas~$A$.
\end{proposition}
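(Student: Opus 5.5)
The argument has two parts. First we check that $\Psh[\Mod[M]]$ is an algebraic model of \CM. Then we check that the interpretation agrees with the truth sets.

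\textbf{The algebra is a model.} By \cref{prop:cs-to-ha}, $\Psh[C]$ is a Heyting algebra with carrier $\LUp(W)$. By \cref{prop:lupset-modal-operator}, $\mOp$ is a monotone function $\LUp(W)\to\LUp(W)$. Hence $\tuple{\Psh[C],\mOp}$ is a modal Heyting algebra. The valuation $V$ maps atoms to localized up\hyp{}sets by the conditions on a modal cover model, so $\Psh[\Mod[M]] = \triple{\Psh[C]}{\mOp}{V}$ is an algebraic model of \CM. Its interpretation $\eval{-}^{\Psh[\Mod[M]]}$ is given explicitly by the clauses in the proof of \cref{prop:cm-to-am}, together with the extra clause $\eval{\MonPr{A}}^{\Psh[\Mod[M]]} = \mOp{(\eval{A}^{\Psh[\Mod[M]]})}$.

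\textbf{Agreement with truth sets.} We prove $\truth{A}^{\Mod[M]} = \eval{A}^{\Psh[\Mod[M]]}$ by induction on $A$.

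For the propositional connectives we repeat the case analysis of \cref{prop:cm-to-am}; the only non\hyp{}definitional steps are the following.
\begin{itemize}
\item For $\BotPr$, unfolding $\Covp{w}{\emptyset}$ gives exactly $w \in \jOp{(\emptyset)}$.
\item For $A\OrPr B$, the satisfaction clause unfolds to $w\in\jOp{(\truth{A}\cup\truth{B})}$, which matches by the induction hypotheses.
\item For $A\ImpPr B$, the clause $\forall \RiOp{w'}{w}.\ldots$ is precisely $\uOp{\{w\}}\cap\truth{A}\subseteq\truth{B}$.
\end{itemize}

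For the modal case, $w\in\truth{\MonPr{A}}$ iff there is a $\beta$ with $\Covm{w}{\beta}$ and $\beta\subseteq\truth{A}$. By definition this is $w\in\mOp{(\truth{A})}$. The induction hypothesis $\truth{A}=\eval{A}$ then gives $w\in\mOp{(\eval{A})}=\eval{\MonPr{A}}$.

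The equation for contexts follows by a further induction on $\Gamma$, since both sides are finite intersections.

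There is little real obstacle here. The substantive work, namely that $\mOp$ preserves localized up\hyp{}sets via modal refinement and modal localization, is already done in \cref{prop:lupset-modal-operator}. The one point to be careful about is well\hyp{}definedness of the interpretation: it must land in $\LUp(W)$ for the equations to type\hyp{}check in the algebra. This is handled either by \cref{lem:cm-truth-lupset} or directly by the closure properties of $\jOp$, $\Rightarrow$, $\cap$ and $\mOp$ on $\LUp(W)$.
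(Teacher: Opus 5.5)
Your proposal is correct and takes essentially the same route as the paper. The paper's one-line proof combines \cref{prop:cm-to-am} for the propositional cases, \cref{prop:lupset-modal-operator} to make $\mOp$ a monotone endomap of $\LUp(W)$, and the observation that the satisfaction clause for $\MonPr{A}$ unfolds to $\mOp$ applied to the truth set of $A$; your write-up just spells out that induction and each unfolding step explicitly.
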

\begin{proof}
  Follows from \cref{prop:cm-to-am,prop:lupset-modal-operator}
  and the observation
  $\eval{\MonPr{A}}^{\Psh[\Mod[M]]} = \mOp{\eval{A}^{\Psh[\Mod[M]]}}$.
\end{proof}
%
%

%

\section{Modal Cover Semantics for IMLs}\label{sec:imls}

\subsection{Minimal Lax Logic}

The language of \SL extends that of \IPL with a unary
connective~$\LaxPr$ known as the lax modality.
A modal formula~$\LaxPr{A}$ may be intuitively understood as denoting
the truth of formula $A$ qualified by some constraint, i.e. ``possibly
$A$''.
The logic \SL admits the characteristic
axiom~$\SA : A \AndPr \LaxPr{B} \ImpPr \LaxPr{(A \AndPr B)}$, which
states that if $A$ is true and $B$ is possibly true, then
both $A$ and $B$ are possibly true.
The proof rules for \SL extend those of \IPL with a
rule~\SL/$\LaxPr$\nbhyp{}Map (for ``mapping'').
\begin{align*}
  \Prop\ \ A,B  :=  \ldots\ |\ \LaxPr{A} \qquad
  \inferrule[\SL/$\LaxPr$\nbhyp{}Map]{%
      \Gamma  \vdash \LaxPr{A} \\
      \ExtCtx{\Gamma}{A} \vdash B
    }{%
      \Gamma \vdash \LaxPr{B}
  }
\end{align*}

An \emph{\SL algebra} is a modal Heyting algebra~$\tuple{H,m}$ where
the monotone function~$m : U \to U$ satisfies the
inequality~$a \times m(b) \leq m(a \times b)$, for all~$a,b \in U$.
We may equivalently characterize an \SL algebra ``equationally'' (as
in \cite[Definition 4]{AlechinaMPR01}) by dropping the monotonicity
condition on the function~$m$ in favor of an additional
inequality~$m(a) \leq m(a + b)$.
An algebraic model~$\Mod[A] = \triple{H}{m}{V}$ of \SL consists of an
\SL algebra~$\tuple{H,m}$ and a valuation function~$V : \Atom \to U$
mapping atoms to the carrier set~$U$ underlying~$H$.
The interpretation of formulas in \SL can be given by extending the
interpretation~~$\eval{-} : \Prop \to U$ of formulas in \IPL with
$\eval{\LaxPr{A}}^{\Mod[A]} = m(\eval{A}^{\Mod[A]})$ for the case of
modal formulas~$\LaxPr{A}$ in \SL.

It can further be shown by induction that if a
judgment~$\Gamma \vdash A$ is derivable in \SL, then
$\eval{\Gamma}^{\Mod[A]} \leq \eval{A}^{\Mod[A]}$ holds for all
algebraic models~$\Mod[A]$ of \SL.
The interesting case is that of
Rule~\SL/$\LaxPr$\nbhyp{}Map.
By applying the induction hypothesis to the premises of the rule, we
obtain the inequalities~$\eval{\Gamma} \leq m{\eval{A}}$ (IH.1) and
$\eval{\Gamma} \times \eval{A} \leq \eval{B}$ (IH.2).
It follows from IH.1 that
$\eval{\Gamma} \leq (\eval{\Gamma} \times m{\eval{A}}) \leq
m(\eval{\Gamma} \times \eval{A})$ for an \SL algebra, which when
combined with IH.2 and monotonicity of $m$ gives us the
inequality~$\eval{\Gamma} \leq m{\eval{B}}$ as desired.

A modal cover model~$\Mod[M] = \triple{C}{\Covm{}{}}{V}$ of \SL
consists of a modal cover system~$\tuple{C,\Covm{}{}}$ and a valuation
function~$V : \Atom \to \LUp{(W)}$, where the modal covering
relation~$\Covm{}{}$ satisfies, in addition to the usual modal
refinement and localization conditions, a \emph{modal inclusion}
condition stated below:
\begin{itemize}
\item \emph{Modal Inclusion}: If $\Covm{w}{\alpha}$, then $\Rin{\{w\}}{\alpha}$
\end{itemize}
The truth of modal formulas for an arbitrary modal cover
model~$\Mod[M]$ of \SL is given as before for \CM by extending the
satisfaction relation to modal formulas~$\LaxPr{A}$ in a manner that
ensures~$\truth{\LaxPr{A}}^{\Mod[M]} = \mOp{\truth{A}^{\Mod[M]}}$.
\begin{equation*}
  \begin{array}{l@{\;\Vdash\;} @{\;} l @{\;\text{iff}\;}c@{\;} l}
    \Mod[M],w & \LaxPr{A} & & \exists \beta.\, \Covm{w}{\beta}\ \text{and}\ \forall v \in \beta.\, \Mod[M],v \Vdash A
  \end{array}
\end{equation*}

\begin{proposition}\label[prop]{prop:lupset-sl-algebra}
  Every modal cover model~$\Mod[M]=\triple{C}{\Covm{}{}}{V}$ of \SL
  determines an equivalent algebraic
  model~$\Psh[\Mod[M]]=\triple{\Psh[C]}{\$\mOp}{V}$ of \SL, whose
  underlying Heyting algebra~$\Psh[C]$ is given by localized
  up\hyp{}sets as in \cref{prop:cm-to-am}, such that
  $\truth{A}^{\Mod[M]} = \eval{A}^{\Psh[\Mod[M]]}$ and
  $\truth{\Gamma}^{\Mod[M]} = \eval{\Gamma}^{\Psh[\Mod[M]]}$ for all
  formulas~$A$ and contexts~$\Gamma$ in \SL.
\end{proposition}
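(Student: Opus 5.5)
Most of the work is already done by \cref{prop:cm-to-am,prop:lupset-modal-operator,prop:lupset-modal-algebra}. By \cref{prop:cs-to-ha}, $\Psh[C]$ is a Heyting algebra. By \cref{prop:lupset-modal-operator}, $\mOp$ is a monotone map on $\LUp{(W)}$; that result uses only modal refinement and modal localization, which are still assumed here. So $\triple{\Psh[C]}{\mOp}{V}$ is a modal Heyting algebra with a valuation into $\LUp{(W)}$. The equations $\truth{A}^{\Mod[M]} = \eval{A}^{\Psh[\Mod[M]]}$ and $\truth{\Gamma}^{\Mod[M]} = \eval{\Gamma}^{\Psh[\Mod[M]]}$ then follow by the same induction on $A$ and $\Gamma$ as in \cref{prop:cm-to-am}. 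The \IPL cases are unchanged. For the new case, the satisfaction clause for $\LaxPr{A}$ gives $\truth{\LaxPr{A}}^{\Mod[M]} = \mOp{\truth{A}^{\Mod[M]}}$, which by the induction hypothesis equals $\mOp{\eval{A}^{\Psh[\Mod[M]]}} = \eval{\LaxPr{A}}^{\Psh[\Mod[M]]}$, exactly as in \cref{prop:lupset-modal-algebra}.

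The only new obligation is that $\mOp$ satisfies the strength inequality $X \cap \mOp{Y} \subseteq \mOp{(X \cap Y)}$ for localized up-sets $X,Y$, so that $\tuple{\Psh[C],\mOp}$ is an \SL algebra. I would prove it directly from modal inclusion. Take $w \in X \cap \mOp{Y}$. Then there is $\beta$ with $\Covm{w}{\beta}$ and $\beta \subseteq Y$. Modal inclusion gives $\Rin{\{w\}}{\beta}$, so $\Ri{w}{v}$ for every $v \in \beta$. Since $X$ is an up-set and $w \in X$, every $v \in \beta$ lies in $X$. Hence $\beta \subseteq X \cap Y$, and $w \in \mOp{(X \cap Y)}$. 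Note that $X \cap Y$ is again a localized up-set, being the meet in $\Psh[C]$, so the inequality holds inside the algebra.

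I expect no real obstacle. The one point to check is that the strength inequality uses only the up-set property of $X$ and not localization. It does, so modal inclusion is exactly the condition needed, in the same way that inclusion was used for the implication case in \cref{lem:ipl-truth-lupset}. It is also worth remarking that the equational characterization ($m(a) \leq m(a+b)$ in place of monotonicity) need not be checked separately, since $\mOp$ is already monotone.
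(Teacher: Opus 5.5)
Your proposal is correct and follows the same route as the paper. You get the modal Heyting algebra from \cref{prop:cm-to-am,prop:lupset-modal-operator}, derive the strength inequality $X \cap \mOp{Y} \subseteq \mOp{(X \cap Y)}$ from modal inclusion, and close the equivalence by the same induction, using $\truth{\LaxPr{A}}^{\Mod[M]} = \mOp{\truth{A}^{\Mod[M]}}$ in the modal case; your up-set argument for the strength inequality spells out the step the paper only names.
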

\begin{proof}
  Every modal cover system~$\tuple{C,\Covm{}{}}$ determines a modal
  Heyting algebra~$\tuple{\Psh[C],\mOp}$, as in the proof of
  \cref{prop:lupset-modal-algebra} due to
  \cref{prop:cm-to-am,prop:lupset-modal-operator}.
  To show $\tuple{\Psh[C],\mOp}$ is also an \SL algebra, it remains to
  show $X \cap \mOp{Y} \subseteq \mOp{(X \cap Y)}$ for
  all~$X,Y \in \LUp{(W)}$, which we achieve using the modal inclusion
  condition.
  
  Observe that the interpretation of formulas in $\Psh[\Mod[M]]$
  readily satisfies the equality
  $\eval{\LaxPr{A}}^{\Psh[\Mod[M]]} = \mOp{\eval{A}^{\Psh[\Mod[M]]}}$ by
  definition.
  As a result, we can once again show that
  $\eval{A}^{\Psh[\Mod[M]]} = \truth{A}^{\Mod[M]}$ and
  $\eval{\Gamma}^{\Psh[\Mod[M]]} = \truth{\Gamma}^{\Mod[M]}$ by
  induction on $A$ and $\Gamma$.
  For the case of modal formulas~$\LaxPr{A}$, we observe that
  $\eval{\LaxPr{A}}^{\Psh[\Mod[M]]} = \mOp{\eval{A}^{\Psh[\Mod[M]]}} =
  \mOp{\truth{A}^{\Mod[M]}} = \truth{\LaxPr{A}}^{\Mod[M]}$ .
\end{proof}
\begin{proposition}[Soundness for \SL]\label[proposition]{prop:sl-sound}
If \,$\Gamma \vdash A$, then $\Gamma \satisfies A$.
\end{proposition}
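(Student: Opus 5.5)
The plan is to avoid a direct induction on derivations in the model and instead go through the algebraic semantics, using \cref{prop:lupset-sl-algebra}. Suppose $\Gamma \vdash A$ is derivable in \SL, and fix an arbitrary modal cover model~$\Mod[M]$ of \SL. We must show $\truth{\Gamma}^{\Mod[M]} \subseteq \truth{A}^{\Mod[M]}$. By \cref{prop:lupset-sl-algebra}, $\Mod[M]$ determines an algebraic model~$\Psh[\Mod[M]]$ of \SL with $\truth{\Gamma}^{\Mod[M]} = \eval{\Gamma}^{\Psh[\Mod[M]]}$ and $\truth{A}^{\Mod[M]} = \eval{A}^{\Psh[\Mod[M]]}$. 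Since the ordering in $\Psh[C]$ is set inclusion, it suffices to show $\eval{\Gamma}^{\Psh[\Mod[M]]} \leq \eval{A}^{\Psh[\Mod[M]]}$.

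This inequality is exactly algebraic soundness of \SL, which the paper has already argued. One proceeds by induction on the derivation of $\Gamma \vdash A$, and it holds in every algebraic model of \SL.

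The \IPL cases are the standard Heyting-algebra ones.

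The only modal case is Rule~\SL/$\LaxPr$\nbhyp{}Map, and there we use the two properties of an \SL algebra:
\begin{itemize}
\item strength, $a \times m(b) \leq m(a \times b)$;
\item monotonicity of $m$.
\end{itemize}
From $\eval{\Gamma} \leq m\eval{A}$ and $\eval{\Gamma}\times\eval{A} \leq \eval{B}$ we obtain
\[
\eval{\Gamma} \leq \eval{\Gamma}\times m\eval{A} \leq m(\eval{\Gamma}\times\eval{A}) \leq m\eval{B}.
\]
Instantiating this at $\Psh[\Mod[M]]$ gives the desired inclusion.

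The real content lies in \cref{prop:lupset-sl-algebra}, specifically the inequality $X \cap \mOp{Y} \subseteq \mOp{(X\cap Y)}$ for localized up-sets. If it needed re-checking, the argument runs as follows. Suppose $w \in X$ and $\Covm{w}{\beta} \subseteq Y$. By modal inclusion, every $v\in\beta$ refines $w$. Since $X$ is an up-set, $\beta \subseteq X$, so $\beta \subseteq X \cap Y$ and hence $w \in \mOp{(X\cap Y)}$.

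We also need $\mOp$ to preserve localized up-sets, which is \cref{prop:lupset-modal-operator}. With these in hand the soundness proposition is a short composition, and no step should be a genuine obstacle. The only care needed is to make sure the \SL model conditions, in particular modal inclusion, are exactly what the algebraic strength axiom requires.
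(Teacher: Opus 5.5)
Your proposal is correct and follows the paper's own proof: you combine algebraic soundness of \SL, including the strength-plus-monotonicity step for the Map rule, with \cref{prop:lupset-sl-algebra}, which transfers $\eval{\Gamma} \subseteq \eval{A}$ to $\truth{\Gamma} \subseteq \truth{A}$ in every modal cover model of \SL. Your re-check of $X \cap \mOp{Y} \subseteq \mOp{(X \cap Y)}$ via modal inclusion and upward closure of $X$ is also exactly how the paper justifies that $\tuple{\Psh[C],\mOp}$ is an \SL algebra.
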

\begin{proof}
  By soundness of \SL for its algebraic models and
  \cref{prop:lupset-sl-algebra}, we have $\Gamma \vdash A$ implies
  $\eval{\Gamma}^{\Mod[M]} \subseteq \eval{A}^{\Mod[M]}$ for all
  modal cover models~$\Mod[M]$.
  Since the algebraic interpretation of formulas and contexts in \SL
  is equivalent to their respective truth sets, we also have
  $\truth{\Gamma}^{\Mod[M]} \subseteq \truth{A}^{\Mod[M]}$, and thus
  $\Gamma \satisfies A$.
\end{proof}

As before with \CM, to prove completeness for \SL we construct a
canonical modal cover model~$\Mod[N]$ that equates entailment of
formulas in the model~$\Mod[N]$ to provability in \SL.
For this purpose, let us define a modal cover
system~$C_{\text{\SL}} = \tuple{C_{\text{\IPL}},\CovmSL{}{}}$ coupling
the cover system~$C_{\text{\IPL}}$ (reproducing \cref{def:cov-sys-ipl}
in the language of \SL) with the modal covering
relation~$\CovmSL{}{}\ \subseteq \Ctx \times \Pow{(\Ctx)}$ defined
inductively below.
\begin{mathpar}
  \inferrule[]{%
    \Gamma \in \AntTh{(\LaxPr{A})}
  }{%
    \CovmSL{\Gamma}{\{\ExtCtx{\Gamma}{A}\}}
  }

  \inferrule[]{%
    \Gamma \in \AntTh{(\BotPr)}}{%
    \CovmSL{\Gamma}{\emptyset}
  }

  \inferrule[]{%
    \Gamma \in \AntTh{(A \OrPr B)}\\
    \CovmSL{\ExtCtx{\Gamma}{A}}{\alpha_1} \\
    \CovmSL{\ExtCtx{\Gamma}{B}}{\alpha_2} \\
  }{%
    \CovmSL{\Gamma}{\alpha_1 \union \alpha_2}
  }
\end{mathpar}
As before with the relation~$\CovmCM{}{}$, the relation~$\CovmSL{}{}$
can be shown to satisfy the modal refinement and localization
conditions.
In contrast to $\CovmCM{}{}$, however, $\CovmSL{}{}$ crucially also
satisfies the modal inclusion condition.
This is because all contexts~$\Gamma'$ in a
cover~$\CovmSLOp{\alpha}{\Gamma}$ subsume~$\Gamma$, i.e.
$\Gamma \subseteq \Gamma'$, and thus $\Rin{\{\Gamma\}}{\alpha}$.

\begin{lemma}[Truth Lemma]\label[lem]{lem:sl-truth-lemma}
  The tuple~$\Mod[N] = \tuple{C_{\text{\SL}},\AntTh}$ is a cover model
  of \SL s.t. $\truth{A}^{\Mod[N]} = \AntTh{(A)}$.
\end{lemma}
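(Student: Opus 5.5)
We first verify that $\Mod[N]$ is a modal cover model of \SL, and then prove the characteristic equation $\truth{A}^{\Mod[N]} = \AntTh{(A)}$ by induction on $A$, following \cref{lem:cm-truth-lemma}.

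\textbf{Model conditions.} The cover system $C_{\text{\IPL}}$ is as in \cref{def:cov-sys-ipl}. The valuation $\AntTh{(p)}$ is an up\hyp{}set by weakening. It is localizing by induction on $\CovpIPL{}{}$:
\begin{itemize}
\item the identity case is trivial;
\item the $\BotPr$ case uses $\BotPr$\nbhyp{}Elim;
\item the $\OrPr$ case applies the induction hypotheses to $\ExtCtx{\Gamma}{A}$ and $\ExtCtx{\Gamma}{B}$ and combines them with $\OrPr$\nbhyp{}Elim.
\end{itemize}
For $\CovmSL{}{}$, modal inclusion holds as noted above. Modal refinement is proved by induction on $\CovmSL{}{}$. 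If $\Gamma \subseteq \Gamma'$ and $\Gamma \vdash \LaxPr{A}$, then $\Gamma' \vdash \LaxPr{A}$, so $\CovmSL{\Gamma'}{\{\ExtCtx{\Gamma'}{A}\}}$, and $\ExtCtx{\Gamma}{A} \subseteq \ExtCtx{\Gamma'}{A}$ gives the required refinement. The $\BotPr$ and $\OrPr$ cases are similar.

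\textbf{Modal localization (main obstacle).} Suppose $\CovpIPL{\Gamma}{\alpha} \subseteq \mOp{X}$. We proceed by induction on the derivation of $\CovpIPL{\Gamma}{\alpha}$.
\begin{itemize}
\item Identity case: $\Gamma \in \mOp{X}$, so some $\CovmSL{\Gamma}{\beta} \subseteq X \subseteq \jOp{X}$.
\item $\BotPr$ case: take $\CovmSL{\Gamma}{\emptyset}$.
\item $\OrPr$ case: the induction hypotheses yield $\CovmSL{\ExtCtx{\Gamma}{A}}{\beta_1} \subseteq \jOp{X}$ and $\CovmSL{\ExtCtx{\Gamma}{B}}{\beta_2} \subseteq \jOp{X}$. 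The $\OrPr$ rule of $\CovmSL{}{}$ then gives $\CovmSL{\Gamma}{\beta_1 \union \beta_2} \subseteq \jOp{X}$.
\end{itemize}
This shared shape of $\CovpIPL{}{}$ and $\CovmSL{}{}$ is what makes the condition hold. The delicate points are exactly these constructor cases, and ensuring that the induction is set up so that the hypothesis applies to the subcovers.

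\textbf{Truth lemma.} The \IPL cases repeat \cref{lem:ipl-truth-lemma}. The case $\LaxPr{B}$ is the interesting one.

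From right to left: if $\Gamma \vdash \LaxPr{B}$, then $\CovmSL{\Gamma}{\{\ExtCtx{\Gamma}{B}\}}$. Since $\ExtCtx{\Gamma}{B} \vdash B$ by Hyp, the induction hypothesis gives $\ExtCtx{\Gamma}{B} \in \truth{B}$, hence $\Gamma \in \truth{\LaxPr{B}}$.

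From left to right: given $\CovmSL{\Gamma}{\beta} \subseteq \truth{B} = \AntTh{(B)}$, we show $\Gamma \vdash \LaxPr{B}$ by induction on $\CovmSL{}{}$.
\begin{itemize}
\item Base case: we have $\Gamma \vdash \LaxPr{A}$ and $\ExtCtx{\Gamma}{A} \vdash B$, so \SL/$\LaxPr$\nbhyp{}Map yields $\Gamma \vdash \LaxPr{B}$. Keeping $\Gamma$ inside the cover $\{\ExtCtx{\Gamma}{A}\}$ is essential here, since the map rule needs $\Gamma$ in context.
\item $\BotPr$ case: by $\BotPr$\nbhyp{}Elim.
\item $\OrPr$ case: the induction hypotheses give $\ExtCtx{\Gamma}{A} \vdash \LaxPr{B}$ and $\ExtCtx{\Gamma}{A'} \vdash \LaxPr{B}$, which combine by $\OrPr$\nbhyp{}Elim.
\end{itemize}
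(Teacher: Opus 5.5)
Your proposal is correct and takes essentially the same approach as the paper, which leaves this proof implicit: it checks the model conditions by induction on the inductively defined covering relations, exploiting the shared constructors of $\CovpIPL{}{}$ and $\CovmSL{}{}$ for modal localization. It then repeats the formula induction of \cref{lem:cm-truth-lemma}, with \SL/$\LaxPr$\nbhyp{}Map in the base case of the inner induction on $\CovmSL{}{}$ (where the paper's \CM proof uses \CM/$\MonPr$\nbhyp{}Mon).
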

\begin{theorem}[Completeness for \SL]\label[thm]{thm:sl-complete}
If \,$\Gamma \satisfies A$, then $\Gamma \vdash A$.
\end{theorem}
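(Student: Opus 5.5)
The plan is to repeat the argument of \cref{thm:ipl-complete} (as was done for \cref{thm:cm-complete}), with \cref{lem:sl-truth-lemma} in place of \cref{lem:ipl-truth-lemma}. The main work is in the truth lemma, so I will also sketch how I expect to discharge it.

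\textbf{The theorem from the truth lemma.} Assume $\Gamma \satisfies A$.
\begin{enumerate}
\item By induction on $\Gamma$, using the rules for $\AndPr$, we get $\Gamma \vdash \bigwedge \Gamma$, that is, $\Gamma \in \AntTh{(\bigwedge \Gamma)}$.
\item \Cref{lem:sl-truth-lemma} gives $\Gamma \in \truth{\bigwedge\Gamma}^{\Mod[N]} = \truth{\Gamma}^{\Mod[N]}$.
\item Since $\Mod[N]$ is a modal cover model of \SL, the hypothesis $\Gamma \satisfies A$ yields $\truth{\Gamma}^{\Mod[N]} \subseteq \truth{A}^{\Mod[N]}$, so $\Gamma \in \truth{A}^{\Mod[N]}$.
\item Applying \cref{lem:sl-truth-lemma} again gives $\Gamma \in \AntTh{(A)}$, that is, $\Gamma \vdash A$.
\end{enumerate}

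\textbf{Proving the truth lemma.} The proof has two parts.

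First, $\Mod[N]$ must really be a model of \SL. We check that $C_{\text{\SL}}$ satisfies modal refinement, modal localization and modal inclusion, and that $\AntTh{(p)}$ is a localized up-set.
\begin{itemize}
\item Modal refinement goes by induction on $\CovmSL{}{}$, weakening the derivations. If $\Gamma \subseteq \Gamma'$ and $\Gamma \vdash \LaxPr{A}$, then $\CovmSL{\Gamma'}{\{\Gamma',A\}}$, and $\{\Gamma',A\}$ refines $\{\Gamma,A\}$.
\item Modal inclusion holds because every context in a modal cover of $\Gamma$ extends $\Gamma$.
\item Modal localization is the step I expect to be the main obstacle. I would prove the stronger statement: if $\CovpIPL{\Gamma}{\alpha}$ and each $\Delta \in \alpha$ has some $\CovmSL{\Delta}{\beta_\Delta}$, then $\CovmSL{\Gamma}{\bigcup_{\Delta \in \alpha}\beta_\Delta}$. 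This is by induction on $\CovpIPL{}{}$, exploiting the fact that the $\BotPr$ and $\OrPr$ rules are shared by the two relations. Since each $\beta_\Delta \subseteq X \subseteq \jOp{X}$, the required $\beta$ exists.
\end{itemize}

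Second, the characteristic property $\truth{B}^{\Mod[N]} = \AntTh{(B)}$ goes by induction on $B$.
\begin{itemize}
\item The \IPL cases are as in \cref{lem:ipl-truth-lemma}.
\item For $\LaxPr{B}$ from right to left: $\Gamma \vdash \LaxPr{B}$ gives $\CovmSL{\Gamma}{\{\Gamma,B\}}$. We have $\Gamma,B \in \AntTh{(B)} = \truth{B}$ by the induction hypothesis.
\item For $\LaxPr{B}$ from left to right, we show by induction on $\CovmSL{}{}$ that $\CovmSL{\Gamma}{\beta} \subseteq \AntTh{(B)}$ implies $\Gamma \vdash \LaxPr{B}$:
\begin{itemize}
\item In the base case we have $\Gamma \vdash \LaxPr{A}$ and $\Gamma,A \vdash B$; \SL/$\LaxPr$\nbhyp{}Map gives $\Gamma \vdash \LaxPr{B}$.
\item The $\BotPr$ case uses $\BotPr$\nbhyp{}Elim.
\item The $\OrPr$ case uses $\OrPr$\nbhyp{}Elim with the two inner induction hypotheses.
\end{itemize}
\end{itemize}
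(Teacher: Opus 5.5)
Your proposal is correct and takes the paper's route: the theorem follows from \cref{lem:sl-truth-lemma} exactly as in \cref{thm:ipl-complete} and \cref{thm:cm-complete}, and you handle the modal case of the truth lemma as the paper does for \CM, by induction on the modal covering relation using \SL/$\LaxPr$\nbhyp{}Map, $\BotPr$\nbhyp{}Elim and $\OrPr$\nbhyp{}Elim. Your proof of modal localization, via the stronger closure property shown by induction on the local covering relation using the shared $\BotPr$ and $\OrPr$ rules, is a faithful elaboration of the paper's remark that this overlap yields localization.
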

\begin{proof}
  By repeating the argument in \cref{thm:cm-complete}, now using
  \cref{lem:sl-truth-lemma}.
\end{proof}

\subsection{Propositional Lax Logic}

The logic \LL extends the axioms of \SL with the axioms~$\RA : A
\ImpPr \LaxPr{A}$ and~$\JA : \LaxPr{\LaxPr{A}} \ImpPr \LaxPr{A}$.
The language of \LL extends that of \IPL with a unary
connective~$\LaxPr$, as with the language of \SL, while the proof
system for \LL extends that of \IPL with the
rules~\LL/$\LaxPr$\nbhyp{}Intro and \LL/$\LaxPr$\nbhyp{}Bind (for
``binding'') given below.

\begin{align*}
  \Prop\ \ A,B  :=  \ldots\ |\ \LaxPr{A} \qquad
  \inferrule[\LL/$\LaxPr$\nbhyp{}Intro]{%
      \Gamma  \vdash A
    }{%
      \Gamma \vdash \LaxPr{A}
    } \qquad
    \inferrule[\LL/$\LaxPr$\nbhyp{}Bind]{%
      \Gamma  \vdash \LaxPr{A} \\
      \ExtCtx{\Gamma}{A} \vdash \LaxPr{B}
    }{%
      \Gamma \vdash \LaxPr{B}
    }
\end{align*}

A \emph{\LL algebra} is an \SL algebra~$\tuple{H,m}$ where the
monotone function~$m : U \to U$ is inflationary and idempotent,
i.e. it additionally satisfies the inequalities~$a \leq m(a)$ and
$m(m(a)) \leq m(a)$, for all $a \in U$.
We may equivalently characterize a \LL algebra without reference to
\SL algebras as a Heyting algebra~$H$ accompanied by a nucleus
operator~$m$.
An algebraic model~$\Mod[A] = \triple{H}{m}{V}$ of \LL consists of a
\LL algebra~$\tuple{H,m}$ and a valuation function~$V : \Atom \to U$,
where the interpretation of modal formulas is given by
$\eval{\LaxPr{A}}^{\Mod[A]} = m(\eval{A}^{\Mod[M]})$.
It is known moreover that $\Gamma \vdash A$ in \LL if and only if
$\eval{\Gamma}^{\Mod[A]} \leq \eval{A}^{\Mod[A]}$ for all algebraic
models~$\Mod[M]$ of \LL~\cite{AlechinaMPR01,Goldblatt11a}.

A modal cover model~$\Mod[M] = \triple{C}{\Covm{}{}}{V}$ of \LL is
a modal cover model of \SL that additionally satisfies the modal
\emph{identity} and \emph{transitivity} conditions stated below:
\begin{itemize}
\item \emph{Modal Identity}: $\Covm{w}{\{w\}}$
\item \emph{Modal Transitivity}: If $\Covm{w}{\alpha}$ and for all $v \in
  \alpha$ there exists an $\alpha_v$ such that $\Covm{v}{\alpha_v}$,
  then $\Covm{w}{\Union_{v \in \alpha} \alpha_v}$
\end{itemize}
The truth of modal formulas for a model~$\Mod[M]$ of \LL is given
identically to \SL ensuring
$\truth{\LaxPr{A}}^{\Mod[M]} = \mOp{\truth{A}^{\Mod[M]}}$.

As before with \SL, every modal cover
model~$\Mod[M]=\triple{C}{\Covm{}{}}{V}$ of \LL determines an
equivalent algebraic model~$\Psh[\Mod[M]]=\triple{\Psh[C]}{\$\mOp}{V}$
of \LL, i.e. $\truth{A}^{\Mod[M]} = \eval{A}^{\Psh[\Mod[M]]}$ and
$\truth{\Gamma}^{\Mod[M]} = \eval{\Gamma}^{\Psh[\Mod[M]]}$ for all
formulas~$A$ and contexts~$\Gamma$ in \LL.
The operator~$\mOp$ is a nucleus on $\LUp{(W)}$---the carrier set of
the Heyting algebra~$\Psh[C]$---since its underlying modal covering
relation~$\Covm{}{}$ satisfies the same conditions (refinement,
inclusion, identity and transitivity) imposed on the local covering
relation~$\Covp{}{}$ underlying the nucleus operator~$\jOp$ (c.f.
\cref{prop:jop-properties}).

\begin{proposition}[Soundness for \LL]\label[proposition]{prop:ll-sound}
  If \,$\Gamma \vdash A$, then $\Gamma \satisfies A$.
\end{proposition}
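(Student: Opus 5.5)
The proof will mirror that of \cref{prop:sl-sound}: first show that every modal cover model of \LL determines an equivalent algebraic model of \LL, then appeal to the known soundness of \LL for its algebraic models~\cite{AlechinaMPR01,Goldblatt11a}. Concretely, given a modal cover model~$\Mod[M]=\triple{C}{\Covm{}{}}{V}$ of \LL, we form $\Psh[\Mod[M]]=\triple{\Psh[C]}{\mOp}{V}$. Since $\Mod[M]$ is in particular a modal cover model of \SL, \cref{prop:lupset-sl-algebra} already gives three facts: $\tuple{\Psh[C],\mOp}$ is an \SL algebra, $V$ lands in $\LUp{(W)}$, and $\truth{A}^{\Mod[M]} = \eval{A}^{\Psh[\Mod[M]]}$ and $\truth{\Gamma}^{\Mod[M]} = \eval{\Gamma}^{\Psh[\Mod[M]]}$. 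The truth clause for $\LaxPr{A}$ is the same in \LL as in \SL, so the inductive equivalence carries over unchanged.

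What remains is to show that $\mOp$ is inflationary and idempotent on $\LUp{(W)}$, so that $\tuple{\Psh[C],\mOp}$ is an \LL algebra.

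\textbf{Inflationary.} For $w \in X$, modal identity gives $\Covm{w}{\{w\}} \subseteq X$, so $w \in \mOp{X}$.

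\textbf{Idempotent.} Let $w \in \mOp{\mOp{X}}$. Then there is $\beta$ with $\Covm{w}{\beta}$ such that each $v \in \beta$ has some $\alpha_v$ with $\Covm{v}{\alpha_v} \subseteq X$. Modal transitivity gives $\Covm{w}{\Union_{v\in\beta}\alpha_v} \subseteq X$, hence $w \in \mOp{X}$.

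These are exactly the arguments used for properties (i) of $\jOp$ in \cref{prop:jop-properties}, with the modal conditions replacing the local ones. The only subtle step is the choice of the family $\alpha_v$. Constructively, this is unproblematic if we read the hypothesis of modal transitivity as taking a function $v \mapsto \alpha_v$ extracted from the proof of $\forall v \in \beta.\,\exists\alpha_v.\ldots$, as in the Agda formalization. This is also the same situation as the local transitivity condition, so no new principle is needed.

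Finally, suppose $\Gamma \vdash A$ in \LL. Algebraic soundness gives $\eval{\Gamma}^{\Psh[\Mod[M]]} \subseteq \eval{A}^{\Psh[\Mod[M]]}$. Transporting along the equalities with truth sets gives $\truth{\Gamma}^{\Mod[M]} \subseteq \truth{A}^{\Mod[M]}$. Since $\Mod[M]$ was arbitrary, $\Gamma \satisfies A$.

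The expected obstacle is only making sure that \cref{prop:lupset-sl-algebra} applies verbatim. As an alternative, one could instead do a direct induction on derivations. In that route the interesting cases are \LL/$\LaxPr$\nbhyp{}Intro, which uses modal identity, and \LL/$\LaxPr$\nbhyp{}Bind, which uses strength (via modal inclusion) followed by modal transitivity. The algebraic route packages these cases more cleanly.
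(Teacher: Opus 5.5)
Your proposal is correct and follows essentially the same route as the paper. A modal cover model of \LL is in particular one of \SL, so \cref{prop:lupset-sl-algebra} supplies the strength inequality and the equality of truth sets with algebraic interpretations; you then show $\mOp$ is inflationary via modal identity and idempotent via modal transitivity, mirroring the nucleus argument for $\jOp$ in \cref{prop:jop-properties}, and conclude by algebraic soundness of \LL, where your explicit verification of the two inequalities simply spells out what the paper states as a one-line remark.
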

\begin{proof}
  Every modal cover model of \LL determines an equivalent algebraic
  model of \LL, which is sound for \LL (repeating the argument in
  \cref{prop:sl-sound}).
  Thus modal cover semantics is as well sound.
\end{proof}

To prove completeness for \LL, we define a modal cover
system~$C_{\text{\LL}} = \tuple{C_{\text{\IPL}},\CovmLL{}{}}$ coupling
the cover system~$C_{\text{\IPL}}$ from earlier with the modal
covering relation~$\CovmLL{}{}\ \subseteq \Ctx \times \Pow{(\Ctx)}$
defined below.
\begin{mathpar}
  \inferrule[]{%
  }{%
    \CovmLL{\Gamma}{\{\Gamma\}}
  }
  
  \inferrule[]{%
    \Gamma \in \AntTh{(\LaxPr{A})}\\
    \CovmLL{\ExtCtx{\Gamma}{A}}{\alpha}
  }{%
    \CovmLL{\Gamma}{\alpha}
  }\\

  \inferrule[]{%
    \Gamma \in \AntTh{(\BotPr)}}{%
    \CovmLL{\Gamma}{\emptyset}
  }

  \inferrule[]{%
    \Gamma \in \AntTh{(A \OrPr B)}\\
    \CovmLL{\ExtCtx{\Gamma}{A}}{\alpha_1} \\
    \CovmLL{\ExtCtx{\Gamma}{B}}{\alpha_2} \\
  }{%
    \CovmLL{\Gamma}{\alpha_1 \union \alpha_2}
  }
\end{mathpar}
The relation~$\CovmLL{}{}$ readily satisfies the modal identity
condition by definition, while modal transitivity and the remaining
conditions can be shown by induction on its definition.
The definition of $\CovmLL{}{}$ ensures that
$\jOp{X} \subseteq \mOp{X}$ for all subsets~$X \subseteq \Ctx$.
The modal localization condition follows as a result, since
$\jOp{\mOp{X}} \subseteq \mOp{\mOp{X}} = \mOp{X} \subseteq
\mOp{\jOp{X}}$, given $\mOp{}$ and $\jOp$ are also nuclei on
$\Pow{(\Ctx)}$.

\begin{theorem}[Completeness for \LL]\label[thm]{thm:ll-complete}
If \,$\Gamma \satisfies A$, then $\Gamma \vdash A$.
\end{theorem}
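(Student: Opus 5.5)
We follow the pattern of \cref{thm:ipl-complete,thm:cm-complete}. The core of the argument is a truth lemma for the canonical model $\Mod[N] = \tuple{C_{\text{\LL}},\AntTh}$: it is a modal cover model of \LL with $\truth{A}^{\Mod[N]} = \AntTh{(A)}$ for every \LL formula~$A$. Given this lemma, the theorem follows verbatim. We have $\Gamma \vdash \bigwedge\Gamma$, hence $\Gamma \in \truth{\Gamma}^{\Mod[N]}$. Then $\Gamma \satisfies A$ gives $\Gamma \in \truth{A}^{\Mod[N]} = \AntTh{(A)}$, i.e.\ $\Gamma \vdash A$.

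\textbf{Model conditions.} We take for granted the conditions on $\CovmLL{}{}$ asserted just before the theorem: modal refinement, inclusion, identity, transitivity, and modal localization via $\jOp{X}\subseteq\mOp{X}$. The valuation $\AntTh{(p)}$ is a localized up\hyp{}set exactly as in \cref{lem:ipl-truth-lemma}, since the local cover relation is unchanged. Modal inclusion holds because every context in a cover of $\Gamma$ extends $\Gamma$, which is a simple induction on $\CovmLL{}{}$.

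\textbf{Truth lemma.} We proceed by induction on formulas. The \IPL connectives are handled as in \cref{lem:ipl-truth-lemma}, so only the case $\LaxPr{B}$ is new.

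For the right\hyp{}to\hyp{}left direction, suppose $\Gamma \vdash \LaxPr{B}$. By the second rule applied to the identity cover $\CovmLL{\ExtCtx{\Gamma}{B}}{\{\ExtCtx{\Gamma}{B}\}}$, we obtain $\CovmLL{\Gamma}{\{\ExtCtx{\Gamma}{B}\}}$. We have $\ExtCtx{\Gamma}{B} \in \AntTh{(B)} \subseteq \truth{B}$ by the induction hypothesis, so $\Gamma\in\truth{\LaxPr{B}}$.

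For the left\hyp{}to\hyp{}right direction, we use the induction hypothesis to reduce to an auxiliary claim: if $\CovmLL{\Gamma}{\beta}$ and $\beta \subseteq \AntTh{(B)}$, then $\Gamma \vdash \LaxPr{B}$. We prove the claim by induction on the derivation of $\CovmLL{\Gamma}{\beta}$.
\begin{itemize}
\item Identity case: $\Gamma\vdash B$, so \LL/$\LaxPr$\nbhyp{}Intro gives $\Gamma\vdash\LaxPr{B}$.
\item Lax case: from $\Gamma\vdash\LaxPr{A}$ and the inner induction hypothesis $\ExtCtx{\Gamma}{A}\vdash\LaxPr{B}$, the rule \LL/$\LaxPr$\nbhyp{}Bind gives $\Gamma\vdash\LaxPr{B}$.
\item $\BotPr$ case: $\BotPr$\nbhyp{}Elim applies.
\item $\OrPr$ case: the inner induction hypothesis yields $\ExtCtx{\Gamma}{A_i}\vdash\LaxPr{B}$ for each disjunct, and $\OrPr$\nbhyp{}Elim combines them.
\end{itemize}
Note that the inner induction must be stated for arbitrary contexts, since the premises concern the extended contexts $\ExtCtx{\Gamma}{A}$.

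\textbf{Main obstacle.} The difficulty lies not in this final assembly but in the model conditions, above all modal transitivity and modal refinement for $\CovmLL{}{}$. These need weakening (monotonicity of $\vdash$ under context inclusion) and a careful induction on the first cover, substituting the covers $\alpha_v$ at the leaves. The truth\hyp{}lemma case itself is routine once Bind is available.
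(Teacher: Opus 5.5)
Your proposal is correct and matches the paper's proof. Like the paper, you establish the truth lemma $\truth{A}^{\Mod[N]} = \AntTh{(A)}$ for $\Mod[N] = \tuple{C_{\text{\LL}},\AntTh}$ and then repeat the argument of \cref{thm:sl-complete}, with your inner induction on $\CovmLL{}{}$ (using \LL/$\LaxPr$\nbhyp{}Intro, \LL/$\LaxPr$\nbhyp{}Bind, $\BotPr$\nbhyp{}Elim and $\OrPr$\nbhyp{}Elim) spelling out the modal case that the paper leaves implicit.
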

\begin{proof}
  By showing the truth lemma for
  $\Mod[N] = \tuple{C_{\text{\LL}},\AntTh}$ and repeating the argument
  in \cref{thm:sl-complete}.
\end{proof}

\subsection{Dual-context formulation of \CKBox}

The language of \CKBox extends that of \IPL with a unary
connective~$\BoxPr$ known as the box modality.
A modal formula~$\BoxPr{A}$ may be read as ``necessarily A'' and
intuitively understood as asserting that $A$ is valid,
i.e. universally true.
The logic \CKBox admits the \emph{necessitation} rule, which states
that if $A$ is valid then so is $\Box{A}$, and the characteristic
axiom~$\KA : \BoxPr{(A \ImpPr B)} \ImpPr \BoxPr{A} \ImpPr \BoxPr{B}$.
A \emph{dual\hyp{}context} sequent\hyp{}style proof system for \CKBox,
denoted \DCKBox, is given using
judgments~$\DualCtx{\Delta}{\Gamma} \vdash A$ indexed by two
contexts~$\Delta$ and $\Gamma$.
The ``global'' context~$\Delta$ consists of formulas that are assumed
to be valid, while the usual ``local'' context~$\Gamma$ consists of
formulas that are assumed to be true for some specific world.
The proof rules for the non\nbhyp{}modal fragment can be given as
before for \IPL by leaving the global context untouched (see
\cref{sec:app}).
The proof rules for the modal fragment are given by the
rules~\DCKBox/$\BoxPr$\nbhyp{}Intro
and~\DCKBox/$\BoxPr$\nbhyp{}Elim defined below.

\begin{align*}
  \Prop\ \ A,B  :=  \ldots\ |\ \BoxPr{A} \qquad
  \inferrule[\DCKBox/$\BoxPr$\nbhyp{}Intro]{%
      \DualCtx{\Delta}{\EmptyCtx} \vdash A
    }{%
      \DualCtx{\Delta}{\Gamma} \vdash \BoxPr{A}
    } \qquad
    \inferrule[\DCKBox/$\BoxPr$\nbhyp{}Elim]{%
      \DualCtx{\Delta}{\Gamma} \vdash \BoxPr{A} \\
       \DualCtx{\ExtCtx{\Delta}{A}}{\Gamma} \vdash B
    }{%
      \DualCtx{\Delta}{\Gamma} \vdash B
    }
\end{align*}

A \emph{\CKBox algebra} is a modal Heyting algebra~$\tuple{H,m}$ where
the monotone function~$m : U \to U$ preserves all finite meets, i.e.
it satisfies the equations~$m (1) = 1$ and~$m(a \times b) = m(a)
\times m(b)$, for all~$a,b \in U$.
An algebraic model~$\Mod[A] = \triple{H}{m}{V}$ of \DCKBox consists of
a \CKBox algebra~$\tuple{H,m}$ and a valuation
function~$V : \Atom \to U$.
The interpretation~$\eval{-}^{\Mod[A]}$ of formulas and contexts in
the set~$U$ is defined as before for the previous logics, where the
interpretation of modal formulas is given as
$\eval{\BoxPr{A}}^{\Mod[A]} = m(\eval{A}^{\Mod[A]})$.

For some algebra~$\Mod[A]$ of \CKBox, a formula~$A$ is said to be
\emph{algebraically valid} iff $\eval{A} = 1$ and \emph{algebraically
true} for an element (as opposed to world) $u \in U$ iff $u \leq
\eval{A}$.
Consequentially, a formula~$A$ is algebraically valid iff it is
algebraically true for all elements.
The requirement that $m$ preserves all finite meets allows us to show
that the necessitation rule and axiom~$\KA$ are algebraically sound
principles.
The equation~$m(1) = 1$ allows us to show that the necessitation rule
is algebraically sound: if a formula~$A$ is algebraically valid,
meaning $\eval{A} = 1$, then $\eval{\BoxPr{A}} = m(\eval{A}) = m(1) =
1$, and thus $\BoxPr{A}$ is also algebraically valid.
Similarly, the equation~$m(a) \times m(b) = m(a \times b)$ allows us
to show that axiom~$\KA : \BoxPr{(A \ImpPr B)} \ImpPr \BoxPr{A} \ImpPr
\BoxPr{B}$ is algebraically valid, since it implies that the
inequality~$m(a \ImpPr b) \times m(a) \leq m(b)$ holds for all
elements~$a,b \in U$.

If a dual\hyp{}context judgment~$\DualCtx{\Delta}{\Gamma} \vdash A$
is derivable in \DCKBox, then the
inequality~$m(\eval{\Delta}^{\Mod[A]}) \times \eval{\Gamma}^{\Mod[A]}
\leq \eval{A}^{\Mod[A]}$ must hold for all algebraic models~$\Mod[A]$
of \DCKBox.
This can be observed readily by induction on the derivation of
judgment by using the equality~$1 = m (1)$ for the case of
Rule~\DCKBox/$\BoxPr$\nbhyp{}Intro and $m(a) \times m(b) = m(a \times
b)$ for the case of Rule~\DCKBox/$\BoxPr$\nbhyp{}Elim.
Alternatively, we may also appeal to the soundness of the categorical
interpretation of \DCKBox \cite[Section 6.2]{Kavvos17}, where the
modality~$\BoxPr{}$ is interpreted as an endofunctor (analogous to
$m$) preserving finite products ($\times$) on a cartesian\hyp{}closed
category ($H$).

A modal cover model~$\Mod[M] = \triple{C}{\Covm{}{}}{V}$ of \DCKBox
consists of a modal cover system~$\tuple{C,\Covm{}{}}$ and a valuation
function~$V$, where the modal covering relation~$\Covm{}{}$ satisfies,
in addition to the usual modal refinement and localization
conditions, the modal \emph{seriality} and \emph{confluence}
conditions stated below:
\begin{itemize}
\item \emph{Modal Seriality}: For all $w \in W$, there exists an
  $\alpha$ such that $\Covm{w}{\alpha}$
\item \emph{Modal Confluence}: If $\Covm{w}{\alpha}$ and
  $\Covm{w}{\beta}$, then there exists a $\gamma$
  s.t. $\Covm{w}{\gamma}$ and $\Rin{\alpha}{\gamma} \RinOp{}{\beta}$
\end{itemize}
The truth of \CKBox formulas is defined as before
ensuring~$\truth{\BoxPr{A}}^{\Mod[M]} = \mOp{\truth{A}^{\Mod[M]}}$, by
extending the definition of the satisfaction relation for \IPL with
the following case for modal formulas:
\begin{equation*}
  \begin{array}{l@{\;\Vdash\;} @{\;} l @{\;\text{iff}\;}c@{\;} l}
    \Mod[M],w & \BoxPr{A} & & \exists \beta.\, \Covm{w}{\beta}\ \text{and}\ \forall v \in \beta.\, \Mod[M],v \Vdash A
  \end{array}
\end{equation*}
Entailment in a modal cover model~$\Mod[M]$ of \DCKBox is defined
by incorporating dual contexts as $\DualCtx{\Delta}{\Gamma}
\satisfies_{\Mod[M]} A$ if and only if $\mOp{\truth{\Delta}^{\Mod[M]}}
\cap \truth{\Gamma}^{\Mod[M]} \subseteq \truth{A}^{\Mod[M]}$.
The application of the modal operator~$\mOp$ to the interpretation of
the global context~$\Delta$ ensures that all the assumptions in
$\Delta$ are all valid.
Continuing a previous convention, we will write
$\DualCtx{\Delta}{\Gamma} \satisfies A$ to mean
$\DualCtx{\Delta}{\Gamma} \satisfies_{\Mod[M]} A$ for all
models~$\Mod[M]$.

A cover model~$\Mod[M]=\triple{C}{\Covm{}{}}{V}$ of \DCKBox
determines an equivalent algebraic
model~$\Psh[\Mod[M]]=\triple{\Psh[C]}{\$\mOp}{V}$ of \DCKBox such that
$\truth{A}^{\Mod[M]} = \eval{A}^{\Psh[\Mod[M]]}$ and
$\truth{\Gamma}^{\Mod[M]} = \eval{\Gamma}^{\Psh[\Mod[M]]}$ for all
formulas~$A$ and contexts~$\Gamma$ in \CKBox.
Recollect that the maximal element of the Heyting algbra~$\Psh[C]$ is
$W$ and its meets are given by the intersection~$\cap$ of
localized\hyp{}up sets.
The modal seriality and confluence condition respectively allow us to
show that the operator~$\mOp$ satisfies the equations~$W = \mOp{W}$
and $\mOp{X} \cap \mOp{Y} = \mOp{(X \cap Y)}$ desired of a \CKBox
algebra.
The inequalities~$\mOp{W} \subseteq W$ and $\mOp{(X \cap Y)} \subseteq
\mOp{X} \cap \mOp{Y}$ hold readily since $W$ is maximal and $\mOp$ is
monotonic.
The converse~$W \subseteq \mOp{W}$ of the former follows from the
seriality condition: any $w \in W$ has some modal
cover~$\alpha \subseteq W$ and thus we also have $w \in \mOp{W}$.
Similarly for any~$X,Y \in \LUp(W)$, the inequality~$\mOp{X} \cap
\mOp{Y} \subseteq \mOp{(X \cap Y)}$ follows from the confluence
condition: if $w \in \mOp{X} \cap \mOp{Y}$, then for some $\alpha,
\beta$ we have $\Covm{w}{\alpha} \subseteq X$ and $\Covm{w}{\beta}
\subseteq Y$, from which we obtain a cover $\gamma$ of $w$
s.t. $\Rin{\alpha}{\gamma} \RinOp{}{\beta}$ by applying confluence.
Since $X$ and $Y$ are up\hyp{}sets, it must be the case that $\gamma
\subseteq X$ and $\gamma \subseteq Y$, which means $\Covm{w}{\gamma}
\subseteq (X \cap Y)$ and thus $w \in \mOp{(X \cap Y)}$.
Altogether we have shown the desired equalities.

To prove completeness for \DCKBox, we define a modal cover system by taking
pairs of contexts, i.e. the set~$\Ctx \times \Ctx$, for worlds,
point\hyp{}wise context inclusion for the preorder relation, and the
below inductively defined relation~$\CovmDCKBox{}{}$ for the modal
covering relation.
The local covering relation is given by re\hyp{}defining the
relation~$\CovpIPL{}{}$ for dual\hyp{}contexts by prepending a global
context~$\Delta$ uniformly to all the cases.
\begin{mathpar}
  \inferrule[]{%
  }{%
    \CovmDCKBox{\DualCtx{\Delta}{\Gamma}}{\{\DualCtx{\Delta}{\EmptyCtx}\}}
  }
  
  \inferrule[]{%
    \DualCtx{\Delta}{\Gamma} \in \AntTh{(\BoxPr{A})}\\
    \CovmDCKBox{\DualCtx{\ExtCtx{\Delta}{A}}{\Gamma}}{\alpha}
  }{%
    \CovmDCKBox{\Gamma}{\alpha}
  }\\

  \inferrule[]{%
    \DualCtx{\Delta}{\Gamma} \in \AntTh{(\BotPr)}}{%
    \CovmDCKBox{\DualCtx{\Delta}{\Gamma}}{\emptyset}
  }

  \inferrule[]{%
    \DualCtx{\Delta}{\Gamma} \in \AntTh{(A \OrPr B)}\\
    \CovmDCKBox{\DualCtx{\Delta}{\ExtCtx{\Gamma}{A}}}{\alpha_1} \\
    \CovmDCKBox{\DualCtx{\Delta}{\ExtCtx{\Gamma}{B}}}{\alpha_2} \\
  }{%
    \CovmDCKBox{\DualCtx{\Delta}{\Gamma}}{\alpha_1 \union \alpha_2}
  }
\end{mathpar}

\begin{theorem}[Soundness and Completeness for \CKBox]\label[thm]{thm:ckbox-sound-complete}
$\DualCtx{\Delta}{\Gamma} \vdash A$ if and only if $\DualCtx{\Delta}{\Gamma} \satisfies A$.
\end{theorem}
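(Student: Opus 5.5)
We prove the two directions separately, following the pattern used for \SL and \LL but adapted to dual contexts.

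\emph{Soundness.} We rely on the observation made just before the statement. Every modal cover model $\Mod[M]$ of \DCKBox determines an equivalent algebraic model $\Psh[\Mod[M]]=\triple{\Psh[C]}{\mOp}{V}$ of \DCKBox. It satisfies $\truth{A}^{\Mod[M]} = \eval{A}^{\Psh[\Mod[M]]}$ and $\truth{\Gamma}^{\Mod[M]}=\eval{\Gamma}^{\Psh[\Mod[M]]}$, because seriality and confluence make $\mOp$ preserve finite meets. Algebraic soundness gives $\mOp{\eval{\Delta}} \cap \eval{\Gamma} \subseteq \eval{A}$ whenever $\DualCtx{\Delta}{\Gamma} \vdash A$. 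Rewriting along these equalities yields $\mOp{\truth{\Delta}} \cap \truth{\Gamma} \subseteq \truth{A}$, which is exactly $\DualCtx{\Delta}{\Gamma} \satisfies_{\Mod[M]} A$.

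\emph{Completeness.} We build the canonical model $\Mod[N]$ on worlds $\Ctx\times\Ctx$, ordered by pointwise inclusion, with the dual-context version of $\CovpIPL{}{}$ and with $\CovmDCKBox{}{}$. Atoms are valued by $\AntTh{(p)}=\{\DualCtx{\Delta}{\Gamma} \mid \DualCtx{\Delta}{\Gamma}\vdash p\}$. We check the conditions by induction on the inductive definitions.
\begin{itemize}
\item Local refinement, inclusion, identity and transitivity are as for \IPL, using weakening in both contexts.
\item Modal refinement uses weakening of the global context.
\item Modal seriality is immediate from the first rule, $\DualCtx{\Delta}{\Gamma} \Covm{}{} \{\DualCtx{\Delta}{\EmptyCtx}\}$.
\item For modal confluence we induct on both covers. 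Every cover produced by the rules has the form $\{\DualCtx{\Delta'}{\EmptyCtx}\}$, or $\emptyset$, or a union of such sets, where each $\Delta'$ extends $\Delta$ by box-eliminated formulas. We combine two derivations by threading the box- and or-eliminations of one through the other. The result is a cover whose members refine members of both, since their global contexts are unions.
\item Modal localization holds because the case splits used by $\CovpIPL{}{}$ (for $\BotPr$ and $\OrPr$) are also rules of $\CovmDCKBox{}{}$.
\end{itemize}

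Next we prove the truth lemma $\truth{A}^{\Mod[N]}=\AntTh{(A)}$ by induction on $A$; the case $\BoxPr{B}$ is the substantive one.
\begin{itemize}
\item Right to left: from $\DualCtx{\Delta}{\Gamma}\vdash \BoxPr{B}$ we use the second rule to reach $\DualCtx{\Delta,B}{\Gamma}$ and then the first rule to get the cover $\{\DualCtx{\Delta,B}{\EmptyCtx}\}$. Its member proves $B$ by \textsc{Hyp} in the global context, a global-hypothesis rule that \DCKBox supplies. The induction hypothesis then places it in $\truth{B}$.
\item Left to right: we show by induction on $\CovmDCKBox{}{}$ that $\DualCtx{\Delta}{\Gamma}\Covm{}{}\beta\subseteq\AntTh{(B)}$ implies $\DualCtx{\Delta}{\Gamma}\vdash\BoxPr{B}$.
  \begin{itemize}
  \item For the base rule, $\DualCtx{\Delta}{\EmptyCtx}\vdash B$ gives $\BoxPr{B}$ by \DCKBox/$\BoxPr$\nbhyp{}Intro.
  \item The box rule closes by \DCKBox/$\BoxPr$\nbhyp{}Elim.
  \item The $\BotPr$ and $\OrPr$ rules close by the corresponding eliminations.
  \end{itemize}
\end{itemize}

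Finally, for completeness suppose $\DualCtx{\Delta}{\Gamma}\satisfies A$. In $\Mod[N]$ the world $\DualCtx{\Delta}{\Gamma}$ lies in $\truth{\Gamma}$ by the truth lemma, as in \cref{thm:ipl-complete}. It also lies in $\mOp{\truth{\Delta}}$: take the cover $\{\DualCtx{\Delta}{\EmptyCtx}\}$. Its member satisfies $\Delta$ only if each formula of $\Delta$ is derivable from $\DualCtx{\Delta}{\EmptyCtx}$, which we cannot use directly since hypotheses of $\Delta$ live in the global context.

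So instead we apply, for each $D\in\Delta$, the derivation $\DualCtx{\Delta}{\Gamma}\vdash\BoxPr{D}$ (by \DCKBox/$\BoxPr$\nbhyp{}Intro from global \textsc{Hyp}). By the truth lemma this gives $\DualCtx{\Delta}{\Gamma}\in\mOp{\truth{D}}$. Confluence, in the form of meet preservation by $\mOp$, then gives $\DualCtx{\Delta}{\Gamma}\in\mOp{\truth{\Delta}}$. Hence $\DualCtx{\Delta}{\Gamma}\in\truth{A}=\AntTh{(A)}$.

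The main obstacle is verifying modal confluence and refinement for $\CovmDCKBox{}{}$, because its covers mix box-elimination steps that enlarge the global context with case splits. We will handle this with a combination lemma proved by nested induction on the two derivations, first establishing that every member of a modal cover of $\DualCtx{\Delta}{\Gamma}$ has the form $\DualCtx{\Delta'}{\EmptyCtx}$ with $\Delta\subseteq\Delta'$.
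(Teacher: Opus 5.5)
Your overall plan is the paper's: soundness through the algebra of localized up-sets with $\mOp$, and completeness through the canonical model on $\Ctx\times\Ctx$ with a truth lemma whose new case is $\BoxPr{B}$. The gap is that this case rests on a rule \DCKBox does not have, and cannot have if your soundness half is to hold. The only hypothesis rule of \DCKBox is \DCKBox/Hyp, for $A\in\Gamma$; nothing reads the global context directly. You observe this yourself in the last step, just before invoking the global rule again to get $\DualCtx{\Delta}{\Gamma}\vdash\BoxPr{D}$.

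With a global Hyp rule the logic changes. $\DualCtx{A}{\BoxPr{A}}\vdash A$ becomes derivable, hence $\DualCtx{\EmptyCtx}{\BoxPr{A}}\vdash A$ by $\BoxPr$-Elim. With the printed $\BoxPr$-Intro you also get $\BoxPr{A}\ImpPr\BoxPr{\BoxPr{A}}$. That is the Pfenning--Davies calculus for \CSFourBox, not \CKBox, and it is unsound for modal cover models of \DCKBox. Take $W=\{w,v\}$ with the discrete order, only identity local covers, $\Covm{w}{\{v\}}$, $\Covm{v}{\{v\}}$ and $V(p)=\{v\}$. All cover-system and modal conditions, including seriality and confluence, hold. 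Yet $w\in\mOp{\truth{p}}\setminus\truth{p}$, so $\DualCtx{p}{\EmptyCtx}\vdash p$ would be derivable while $\DualCtx{p}{\EmptyCtx}\satisfies p$ fails; algebraically you would need $m(a)\le a$. So your soundness half and your completeness half concern different calculi. Without the extra rule, $\DualCtx{\ExtCtx{\Delta}{B}}{\EmptyCtx}\vdash B$ is not derivable, and your right-to-left $\BoxPr{B}$ case breaks.

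The paper prints both $\BoxPr$-Intro and the base modal cover with $\DualCtx{\Delta}{\EmptyCtx}$, which is what invites the extra rule. The reading under which everything works is Kavvos's dual-context calculus for K, where the premise is $\DualCtx{\EmptyCtx}{\Delta}\vdash A$: the global context becomes the local one under the box. This is also the only reading for which the paper's remark that $m(1)=1$ handles $\BoxPr$-Intro is correct:
\begin{itemize}
\item with this premise, $m(1)\times\eval{\Delta}=\eval{\Delta}\le\eval{A}$, and monotonicity of $m$ finishes;
\item the printed premise only gives $m(\eval{\Delta})\le\eval{A}$, and getting $m(\eval{\Delta})\le m(\eval{A})$ from it would need $m(a)\le m(m(a))$.
\end{itemize}
Correspondingly, take the base cover $\CovmDCKBox{\DualCtx{\Delta}{\Gamma}}{\{\DualCtx{\EmptyCtx}{\Delta}\}}$. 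Then:
\begin{itemize}
\item the right-to-left $\BoxPr{B}$ case closes by local Hyp in $\DualCtx{\EmptyCtx}{\ExtCtx{\Delta}{B}}$;
\item the left-to-right base case is exactly $\BoxPr$-Intro;
\item in the final step, the cover $\{\DualCtx{\EmptyCtx}{\Delta}\}$ directly witnesses $\DualCtx{\Delta}{\Gamma}\in\mOp{\truth{\Delta}}$ via local Hyp and the truth lemma, so no detour through meet preservation is needed.
\end{itemize}
Your checks of refinement, localization and confluence go through with members of the form $\DualCtx{\EmptyCtx}{\Delta'}$, $\Delta\subseteq\Delta'$.
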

\begin{proof}
Using the above arguments, by re\hyp{}establishing the truth lemma
once again for \DCKBox.
\end{proof}

\section{Discussion and Further Work}
\label{sec:discussion}

We have presented modal cover semantics as a conservative extension
of Goldblatt's relational cover semantics for IMLs and shown as
examples four IMLs which can be modeled using modal cover
semantics.
We have shown that modal cover semantics semantics retains the
simplicity of model construction in Kripke\hyp{}style semantics, while
overcoming its reliance on classical reasoning to prove completeness.

\textbf{Formalization in type theory}.
The results in this article have been formalized in the proof
assistant and dependently\hyp{}typed programming language
Agda~\cite{Agda2}, whose underlying core type theory is constructive.
Formalizing our results in Agda ensures that our results are indeed
constructive and do not accidentally rely upon classical reasoning
principles.
To encode cover models in type theory, we use a type~$X : \Type$ in
place of a set~$X$ and values~$x: X$ in place of elements~$x \in X$.
We encode subsets~$X,Y \subseteq W$ as functions~$X,Y : W \to \Type$,
and the inclusion~$X \subseteq Y$ as a function $\forall w.\, X w \to
Y w$.
The covering relation~${\Covp{}{}} \subseteq W \times \Pow{(W)}$ is
decomposed into a neighborhood ``directory''~$\N : W \to \Type$ and a
membership relation~${-\varepsilon_w-} : \, W \to \N{w}
\to \Type$.
A cover~$\Covp{w}{\alpha}$ is encoded by an element~$\alpha : \N{w}$,
where a world~$v : W$ in our encoding satisfies the
relation~$v\, {\varepsilon_w}\, \alpha$ if and only if there exists a
world~$v \in W$ such that $v \in \alpha$.
We refer the reader to the accompanying formalization in Agda for
examples and further details.

\textbf{Normalization}.
The completeness proofs in the previous sections can be readily
refined to give normalization algorithms for proofs in the natural
deduction systems of the respective logics.
The definition of normal and neutral forms for this purpose can be
found in \cref{sec:app}, where each inference rule has been carefully
defined to satisfy the subformula property.
The normalization algorithms are implemented using the technique of
Normalization by Evaluation, and can be found in the accompanying Agda
formalization.

\begin{theorem}[Normalization]
  Every judgment derivable in the proof system for \CM/\SL/\LL/\CKBox
  has a derivation in normal form. Moreover, every derivation can be
  normalized to one in normal form.
\end{theorem}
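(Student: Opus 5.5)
The plan is to refine the canonical-model completeness proofs of \cref{thm:cm-complete,thm:sl-complete,thm:ll-complete,thm:ckbox-sound-complete} into Normalization by Evaluation, replacing the antecedent sets $\AntTh{(A)}$ by proof-relevant sets of normal and neutral derivations. For each logic I would define judgments $\Gamma \vdash_{\mathsf{nf}} A$ and $\Gamma \vdash_{\mathsf{ne}} A$ (as in \cref{sec:app}). Neutrals are hypotheses followed by eliminations. Normals are introductions over normals, plus the embedding of neutrals at atomic type. They also include the ``commuting'' eliminations ($\BotPr$-Elim, $\OrPr$-Elim, and for the modal logics the $\MonPr$/$\LaxPr$/$\BoxPr$ elimination rules) whose scrutinee is neutral. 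Every rule is chosen so that each premise formula is a subformula of the conclusion or of a hypothesis. Let $\mathsf{Nf}(A) = \{\Gamma \mid \Gamma \vdash_{\mathsf{nf}} A\}$ and $\mathsf{Ne}(A)$ similarly. In the Agda development these sets are read as types of derivations, so what follows is a program rather than just a proof of existence.

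I would then rebuild the canonical models with neutrals in place of $\AntTh$. The valuation is $V(p) = \mathsf{Nf}(p)$. The local covering relation $\CovpIPL{}{}$ and the modal covering relations $\CovmCM{}{}$, $\CovmSL{}{}$, $\CovmLL{}{}$, $\CovmDCKBox{}{}$ keep their inductive shape, but every side condition $\Gamma \in \AntTh{(\cdot)}$ becomes $\Gamma \in \mathsf{Ne}(\cdot)$. The cover and modal conditions (refinement, localization, and for \SL/\LL/\CKBox the inclusion, identity, transitivity, seriality and confluence conditions) are then checked by induction on the covers exactly as before. The only facts used about neutrals and normals are that they are closed under weakening, which gives refinement. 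Localization of $\mathsf{Nf}(p)$ comes from a ``collapse'' lemma: if $\Covp{\Gamma}{\alpha} \subseteq \mathsf{Nf}(B)$ then $\Gamma \in \mathsf{Nf}(B)$. The modal analogue states that if $\Covm{\Gamma}{\beta} \subseteq \mathsf{Nf}(B)$ then $\Gamma \in \mathsf{Nf}(\MonPr B)$, and similarly for $\LaxPr$ and $\BoxPr$. Both are proved by induction on the cover, using the commuting eliminations with neutral scrutinee.

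The truth lemma is replaced by reflection and reification, proved by mutual induction on formulas. Reflection states $\mathsf{Ne}(A) \subseteq \truth{A}^{\Mod[N]}$, and reification states $\truth{A}^{\Mod[N]} \subseteq \mathsf{Nf}(A)$. For $\OrPr$, $\BotPr$ and the modalities, reflection of a neutral $n$ uses the single-step cover built from $n$, for instance $\Covm{\Gamma}{\{\ExtCtx{\Gamma}{A}\}}$ for \SL. Reification uses the collapse lemmas. Implication is handled by reflecting the fresh hypothesis in the extended context and reifying its application. Next, a soundness or evaluation step shows that $\Gamma \vdash A$ implies $\Gamma \satisfies A$. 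I would take this from \cref{prop:cm-sound,prop:sl-sound,prop:ll-sound,thm:ckbox-sound-complete}, applied to the new canonical model, which is a genuine model of the respective logic. Composing the steps gives the result. We have $\Gamma \in \truth{\Gamma}$ by reflecting the hypotheses. Evaluating a derivation then yields $\Gamma \in \truth{A}$, and reifying yields a normal derivation. Since every step is constructive, the composite is the normalization function.

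I expect the main obstacle to be the modal cases for \LL and \DCKBox. For \LL, modal transitivity of the neutral-indexed $\CovmLL{}{}$, and the collapse lemma through nested $\LaxPr$-Bind, require that a normal form under a cover can be pushed under a neutral bind. I would handle this with an auxiliary ``bind into normal'' lemma, proved by induction on the cover. For \DCKBox, the confluence and seriality conditions must hold for neutral-indexed covers. The reflection of a global hypothesis from $\Delta$ must also be shown to land in $\mOp{\truth{\Delta}}$. There I would first try to prove confluence by induction on the first cover while weakening the second along the refinement relation.
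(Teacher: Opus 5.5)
Your proposal is correct and takes the same route as the paper. The paper's proof consists precisely of refining the truth lemma into reflection $\NeTh{(A)} \subseteq \truth{A}^{\Mod[N]}$ and reification $\truth{A}^{\Mod[N]} \subseteq \NfTh{(A)}$ for the canonical model with valuation $\NfTh$, and then rerunning the completeness argument. Your explicit switch of the canonical covers' side conditions from $\AntTh$ to neutral derivations is only implicit in the paper's $\tuple{C_{\text{\CM}},\NfTh}$, and it is genuinely needed for the $\BotPr$ and $\OrPr$ cases of your collapse lemmas.
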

\begin{proof}
  By refining the statement of the truth lemma, for example in
  \cref{lem:cm-truth-lemma} for \CM, as follows: the tuple~$\Mod[N] =
  \tuple{C_{\text{\CM}},\NfTh}$ is a cover model of \CM such that for
  every formula~$A$, we have $\truth{A}^{\Mod[N]} \subseteq
  \NfTh{(A)}$ and $\NeTh{(A)} \subseteq \truth{A}^{\Mod[N]}$, where
  $\NfTh{(A)} = \{\Gamma \in \Ctx\ |\ \Gamma \vdashNf A\}$ and
  $\NeTh{(A)} = \{\Gamma \in \Ctx\ |\ \Gamma \vdashNe A\}$.
\end{proof}

\textbf{Intuitionistic neighborhood semantics}.
A body of work that is closely related to our approach is
\emph{neighborhood semantics} for intuitionistic modal
logics~\cite{ArecesF09,DalmonteGO20,Dalmonte22,Degroot25}.
The modalities~$\BoxPr{}$ and $\DiaPr{}$ are modeled using a
neighborhood function~$\mcN : W \to \Pow{(\Pow{(W)})}$, for example in
\cite[Definition 4.1]{Dalmonte22}, as follows:
\begin{equation*}
  \begin{array}{l@{\;\Vdash\;} @{\;} l @{\;\text{iff}\;}c@{\;} l}
    \Mod[M], w & \BoxPr{A} & & \forall w'.\, \Ri{w}{w'}\ \text{implies}\ \exists \alpha.\, \alpha \in \mcN{(w')}\ \text{and}\  \forall v.\, v \in \alpha\ \text{implies}\ \Mod[M],v \Vdash A \\
    \Mod[M], w & \DiaPr{A} & & \forall w'.\, \Ri{w}{w'}\ \text{implies}\ \forall \alpha.\, \alpha \in \mcN{(w')}\ \text{implies}\  \exists v.\, v \in \alpha\ \text{and}\ \Mod[M],v \Vdash A
  \end{array}
\end{equation*}
While these clauses can be presented equivalently using a modal
covering relation~$\Covm{}{}\ \subseteq W \times \Pow{(W)}$, a key
difference is that we have used a clause resembling the former to
model \emph{all} modalities, including $\BoxPr{}$ and $\DiaPr{}$,
alike.
Moreover, another difference is the treatment of the positive
connectives in these works.
They do not use a local covering relation~$\Covp{}{}$ and instead
follow the usual Kripke\hyp{}style approach as follows:
\vspace{-\abovedisplayskip}
\begin{center}
\begin{minipage}{.4\linewidth}
\centering
\begin{equation*}
  \begin{array}{l@{\;\Vdash\;} @{\;} l @{\;\text{iff}\;}c@{\;} l}
    \Mod[M],w & \BotPr & & \text{false}
  \end{array}
\end{equation*}
\end{minipage}%
\begin{minipage}{.6\linewidth}
\centering
\begin{equation*}
  \begin{array}{l@{\;\Vdash\;} @{\;} l @{\;\text{iff}\;}c@{\;} l}
    \Mod[M],w & A \OrPr B & & \Mod[M],w \Vdash A\ \text{or}\ \Mod[M],w \Vdash B
  \end{array}
\end{equation*}
\end{minipage}
\end{center}
The completeness proofs (c.f. \cite[Lemma 4.4]{Dalmonte22}) rely on
prime sets as a result, and are thus not constructive.

\textbf{Further Work}. In this article, our focus has been on IMLs
with a single modality.
Following Goldblatt's work on multi\hyp{}modal logics~\cite[Section
7]{Goldblatt11a}, it should be possible to extend modal cover
semantics to logics such as \CK and \CSFour, featuring both the
$\BoxPr$ and $\DiaPr$ modalities, and Fitch\hyp{}style
formulations~\cite{Borghuis94,Clouston18} that extend the logics
\CKBox and \CSFourBox with an additional modality~$\LockCtx$.

\begin{ack}

I thank Ian Shillito, Sonia Marin, Alex Kavvos, Andreas Abel, Sean
Moss, and my colleagues Justus Matthiesen and Cristina Matache at the
University of Edinburgh, for their comments and feedback on this
work. I also thank Jim de Groot for an introduction to literature on
neighborhood semantics for intuitionistic modal logics.  This work was
funded by a Royal Society Newton International Fellowship.

\end{ack}


\bibliographystyle{./entics}
\bibliography{main}



\appendix
\section{Appendix}\label{sec:app}

\subsection{Proof of \cref{prop:ipl-sound}: Soundness for \IPL}
We show $\Gamma \satisfies A$ by induction on the given derivation of
$\Gamma \vdash A$. The interesting cases are:
  \begin{itemize}
  \item \emph{Rule~$\ImpPr$\nbhyp{}Intro}:
    We must show $\truth{\Gamma} \subseteq \truth{A \ImpPr B}$, which
    states that for all $w \in \truth{\Gamma}$ and all $\RiOp{w'}{w}$,
    we have $w' \in \truth{A}$ implies $w' \in \truth{B}$.
    By applying \cref{lem:ipl-truth-lupset} to $\Gamma$ we know that
    $\truth{\Gamma}$ is an up\hyp{}set, and thus
    $w' \in \truth{\Gamma}$.
    Since $w' \in \truth{\Gamma}$ and $w' \in \truth{A}$, we also have
    $w' \in \truth{\ExtCtx{\Gamma}{A}}$.
    By the induction hypothesis
    $\truth{\ExtCtx{\Gamma}{A}} \subseteq \truth{B}$ we thus have
    $w' \in \truth{B}$ as desired.
  \item \emph{Rule~$\BotPr$\nbhyp{}Elim}: We must show
    $\truth{\Gamma} \subseteq \truth{A}$.
    Suppose some~$w \in \truth{\Gamma}$.
    %
    From the IH~$\truth{\Gamma} \subseteq \truth{\BotPr}$, we know
    $w \in \truth{\BotPr}$, which means
    $\Covp{w}{\emptyset} \subseteq \truth{A}$.
    By applying \cref{lem:ipl-truth-lupset} to $A$, we know
    $\truth{A}$ satisfies localization, and thus it must be case that
    $w \in \truth{A}$.

  \item \emph{Rule~$\OrPr$\nbhyp{}Elim}: We must show
    $\truth{\Gamma} \subseteq \truth{C}$ from the induction
    hypotheses~$\truth{\Gamma} \subseteq \truth{A \OrPr B}$ (IH.1),
    $\truth{\ExtCtx{\Gamma}{A}} \subseteq \truth{C}$ (IH.2) and
    $\truth{\ExtCtx{\Gamma}{B}} \subseteq \truth{C}$ (IH.3).
    Suppose some~$w \in \truth{\Gamma}$.
    From IH.1, we known~$w \in \truth{A \OrPr B}$, which means all
    members of some cover~$\CovpOp{\alpha}{w}$ are either in
    $\truth{A}$ or $\truth{B}$.
    Consider an arbitrary~$v \in \alpha$.
    The reachability condition ensures that $v$ refines $w$.
    By applying \cref{lem:ipl-truth-lupset} to $\Gamma$, we know
    $\truth{\Gamma}$ is an up\hyp{}set, which means
    $v \in \truth{\Gamma}$.
    If $v \in \truth{A}$, then $v \in \truth{\ExtCtx{\Gamma}{A}}$ and
    thus $v \in \truth{C}$ by IH.2.
    Otherwise $v \in \truth{B}$, then
    $v \in \truth{\ExtCtx{\Gamma}{B}}$ and thus $v \in \truth{C}$ by
    IH.2.
    As a result, any~$v \in \alpha$ is in $\truth{C}$, meaning
    $\Covp{w}{\alpha} \subseteq \truth{C}$.
    By applying \cref{lem:ipl-truth-lupset} to $C$, we know
    $\truth{C}$ satisfies localization, and thus $w \in \truth{C}$ as
    desired.
  \end{itemize}

\subsection{Proof-system for \IPL}

\begin{mathpar}
    \inferrule[\IPL/Hyp]{%
      A \in \Gamma
    }{%
      \Gamma \vdash A
    }

    \inferrule[\IPL/$\TopPr$\nbhyp{}Intro]{%
    }{%
      \Gamma \vdash \TopPr
    }%

    \inferrule[\IPL/$\BotPr$\nbhyp{}Elim]{%
      \Gamma \vdash \BotPr
    }{%
      \Gamma \vdash A
    }

    \inferrule[\IPL/$\AndPr$\nbhyp{}Intro]{%
      \Gamma \vdash A \\
      \Gamma \vdash B
    }{%
      \Gamma \vdash A \AndPr B
    }%

    \inferrule[\IPL/$\AndPr$\nbhyp{}Elim\nbhyp{}1]{%
      \Gamma \vdash A \AndPr B
    }{%
      \Gamma \vdash A
    }%

    \inferrule[\IPL/$\AndPr$\nbhyp{}Elim\nbhyp{}2]{%
      \Gamma \vdash A \AndPr B
    }{%
      \Gamma \vdash B
    }

    \inferrule[\IPL/$\ImpPr$\nbhyp{}Intro]{
      \ExtCtx{\Gamma}{A} \vdash B
    }{%
      \Gamma \vdash A \ImpPr B
    }%

    \inferrule[\IPL/$\ImpPr$\nbhyp{}Elim]{%
      \Gamma \vdash A \ImpPr B\\
      \Gamma \vdash A
    }{%
      \Gamma \vdash B
    }%

    \inferrule[\IPL/$\OrPr$\nbhyp{}Intro\nbhyp{}1]{
      \Gamma\vdash A
    }{%
      \Gamma \vdash A \OrPr B
    }%

    \inferrule[\IPL/$\OrPr$\nbhyp{}Intro\nbhyp{}2]{
      \Gamma\vdash B
    }{%
      \Gamma \vdash A \OrPr B
    }%

    \inferrule[\IPL/$\OrPr$\nbhyp{}Elim]{%
      \Gamma \vdash A \OrPr B\\
      \ExtCtx{\Gamma}{A} \vdash C\\
      \ExtCtx{\Gamma}{B} \vdash C
    }{%
      \Gamma \vdash C
    }%
\end{mathpar}

\begin{mathpar}
    \inferrule[\IPL/NE/Hyp]{%
      A \in \Gamma
    }{%
      \Gamma \vdashNe A
    }

    \inferrule[\IPL/NF/$\TopPr$\nbhyp{}Intro]{%
    }{%
      \Gamma \vdashNf \TopPr
    }%

    \inferrule[\IPL/NF/$\BotPr$\nbhyp{}Elim]{%
      \Gamma \vdashNe \BotPr
    }{%
      \Gamma \vdashNf A
    }

    \inferrule[\IPL/NF/$\AndPr$\nbhyp{}Intro]{%
      \Gamma \vdashNf A \\
      \Gamma \vdashNf B
    }{%
      \Gamma \vdashNf A \AndPr B
    }%

    \inferrule[\IPL/NE/$\AndPr$\nbhyp{}Elim\nbhyp{}1]{%
      \Gamma \vdashNe A \AndPr B
    }{%
      \Gamma \vdashNe A
    }%

    \inferrule[\IPL/NE/$\AndPr$\nbhyp{}Elim\nbhyp{}2]{%
      \Gamma \vdashNe A \AndPr B
    }{%
      \Gamma \vdashNe B
    }

    \inferrule[\IPL/NF/$\ImpPr$\nbhyp{}Intro]{
      \ExtCtx{\Gamma}{A} \vdashNf B
    }{%
      \Gamma \vdashNf A \ImpPr B
    }%

    \inferrule[\IPL/NE/$\ImpPr$\nbhyp{}Elim]{%
      \Gamma \vdashNe A \ImpPr B\\
      \Gamma \vdashNf A
    }{%
      \Gamma \vdashNe B
    }%

    \inferrule[\IPL/NF/$\OrPr$\nbhyp{}Intro\nbhyp{}1]{
      \Gamma \vdashNf A
    }{%
      \Gamma \vdashNf A \OrPr B
    }%

    \inferrule[\IPL/NF/$\OrPr$\nbhyp{}Intro\nbhyp{}2]{
      \Gamma \vdashNf B
    }{%
      \Gamma \vdashNf A \OrPr B
    }%

    \inferrule[\IPL/NF/$\OrPr$\nbhyp{}Elim]{%
      \Gamma \vdashNe A \OrPr B\\
      \ExtCtx{\Gamma}{A} \vdashNf C\\
      \ExtCtx{\Gamma}{B} \vdashNf C
    }{%
      \Gamma \vdashNf C
    }%
\end{mathpar}

\subsection{Proof-systems for the logics \CM, \SL and \LL}

\begin{mathpar}
    \inferrule[\CM/$\MonPr$\nbhyp{}Mon]{%
      \Gamma  \vdash \MonPr{A} \\
      A \vdash B
    }{%
      \Gamma \vdash \MonPr{B}
    }%

    \inferrule[\SL/$\LaxPr$\nbhyp{}Map]{%
      \Gamma  \vdash \LaxPr{A} \\
      \ExtCtx{\Gamma}{A} \vdash B
    }{%
      \Gamma \vdash \LaxPr{B}
    }%
    
    \inferrule[\LL/$\LaxPr$\nbhyp{}Intro]{%
      \Gamma  \vdash A
    }{%
      \Gamma \vdash \LaxPr{A}
    }%
    
    \inferrule[\LL/$\LaxPr$\nbhyp{}Bind]{%
      \Gamma  \vdash \LaxPr{A} \\
      \ExtCtx{\Gamma}{A} \vdash \LaxPr{B}
    }{%
      \Gamma \vdash \LaxPr{B}
    }%
\end{mathpar}

\begin{align*}
  \text{\CM} &:= \text{\IPL + \CM/$\MonPr$\nbhyp{}Mon} \\
  \text{\SL} &:= \text{\IPL + \SL/$\LaxPr$\nbhyp{}Map} \\
  \text{\LL} &:= \text{\IPL + \LL/$\LaxPr$\nbhyp{}Intro + \LL/$\LaxPr$\nbhyp{}Bind} \\
\end{align*}

\begin{mathpar}
    \inferrule[\CM/NF/$\MonPr$\nbhyp{}Mon]{%
      \Gamma  \vdashNe \MonPr{A} \\
      A \vdashNf B
    }{%
      \Gamma \vdashNf \MonPr{B}
    }

    \inferrule[\SL/NF/$\LaxPr$\nbhyp{}Map]{%
      \Gamma  \vdashNe \LaxPr{A} \\
      \ExtCtx{\Gamma}{A} \vdashNf B
    }{%
      \Gamma \vdashNf \LaxPr{B}
    }
    
    \inferrule[\LL/NF/$\LaxPr$\nbhyp{}Intro]{%
      \Gamma  \vdashNf A
    }{%
      \Gamma \vdashNf \LaxPr{A}
    }
    
    \inferrule[\LL/NF/$\LaxPr$\nbhyp{}Bind]{%
      \Gamma  \vdashNe \LaxPr{A} \\
      \ExtCtx{\Gamma}{A} \vdashNf \LaxPr{B}
    }{%
      \Gamma \vdashNf \LaxPr{B}
    }
\end{mathpar}

\subsection{Proof-system \DCKBox for the logic \CKBox}

\begin{mathpar}
    \inferrule[\DCKBox/Hyp]{%
      A \in \Gamma
    }{%
      \DualCtx{\Delta}{\Gamma} \vdash A
    }

    \inferrule[\DCKBox/$\TopPr$\nbhyp{}Intro]{%
    }{%
      \DualCtx{\Delta}{\Gamma} \vdash \TopPr
    }%

    \inferrule[\DCKBox/$\BotPr$\nbhyp{}Elim]{%
      \DualCtx{\Delta}{\Gamma} \vdash \BotPr
    }{%
      \DualCtx{\Delta}{\Gamma} \vdash A
    }

    \inferrule[\DCKBox/$\AndPr$\nbhyp{}Intro]{%
      \DualCtx{\Delta}{\Gamma} \vdash A \\
      \DualCtx{\Delta}{\Gamma} \vdash B
    }{%
      \DualCtx{\Delta}{\Gamma} \vdash A \AndPr B
    }%

    \inferrule[\DCKBox/$\AndPr$\nbhyp{}Elim\nbhyp{}1]{%
      \DualCtx{\Delta}{\Gamma} \vdash A \AndPr B
    }{%
      \DualCtx{\Delta}{\Gamma} \vdash A
    }%

    \inferrule[\DCKBox/$\AndPr$\nbhyp{}Elim\nbhyp{}2]{%
      \DualCtx{\Delta}{\Gamma} \vdash A \AndPr B
    }{%
      \DualCtx{\Delta}{\Gamma} \vdash B
    }

    \inferrule[\DCKBox/$\ImpPr$\nbhyp{}Intro]{
      \DualCtx{\Delta}{\ExtCtx{\Gamma}{A}} \vdash B
    }{%
      \DualCtx{\Delta}{\Gamma} \vdash A \ImpPr B
    }%

    \inferrule[\DCKBox/$\ImpPr$\nbhyp{}Elim]{%
      \DualCtx{\Delta}{\Gamma} \vdash A \ImpPr B\\
      \DualCtx{\Delta}{\Gamma} \vdash A
    }{%
      \DualCtx{\Delta}{\Gamma} \vdash B
    }%

    \inferrule[\DCKBox/$\OrPr$\nbhyp{}Intro\nbhyp{}1]{
      \DualCtx{\Delta}{\Gamma} \vdash A
    }{%
      \DualCtx{\Delta}{\Gamma} \vdash A \OrPr B
    }%

    \inferrule[\DCKBox/$\OrPr$\nbhyp{}Intro\nbhyp{}2]{
      \DualCtx{\Delta}{\Gamma} \vdash B
    }{%
      \DualCtx{\Delta}{\Gamma} \vdash A \OrPr B
    }%

    \inferrule[\DCKBox/$\OrPr$\nbhyp{}Elim]{%
      \DualCtx{\Delta}{\Gamma} \vdash A \OrPr B\\
      \DualCtx{\Delta}{\ExtCtx{\Gamma}{A}} \vdash C\\
      \DualCtx{\Delta}{\ExtCtx{\Gamma}{B}} \vdash C
    }{%
      \Gamma \vdash C
    }

    \inferrule[\DCKBox/$\BoxPr$\nbhyp{}Intro]{%
      \DualCtx{\Delta}{\EmptyCtx} \vdash A
    }{%
      \DualCtx{\Delta}{\Gamma} \vdash \BoxPr{A}
    }
    
    \inferrule[\DCKBox/$\BoxPr$\nbhyp{}Elim]{%
      \DualCtx{\Delta}{\Gamma} \vdash \BoxPr{A} \\
       \DualCtx{\ExtCtx{\Delta}{A}}{\Gamma} \vdash \BoxPr{B}
    }{%
      \DualCtx{\Delta}{\Gamma} \vdash B
    }
\end{mathpar}

\begin{mathpar}
    \inferrule[\DCKBox/NE/Hyp]{%
      A \in \Gamma
    }{%
      \DualCtx{\Delta}{\Gamma} \vdashNe A
    }

    \inferrule[\DCKBox/NF/$\TopPr$\nbhyp{}Intro]{%
    }{%
      \DualCtx{\Delta}{\Gamma} \vdashNf \TopPr
    }%

    \inferrule[\DCKBox/NF/$\BotPr$\nbhyp{}Elim]{%
      \DualCtx{\Delta}{\Gamma} \vdashNe \BotPr
    }{%
      \DualCtx{\Delta}{\Gamma} \vdashNf A
    }

    \inferrule[\DCKBox/NF/$\AndPr$\nbhyp{}Intro]{%
      \DualCtx{\Delta}{\Gamma} \vdashNf A \\
      \DualCtx{\Delta}{\Gamma} \vdashNf B
    }{%
      \DualCtx{\Delta}{\Gamma} \vdashNf A \AndPr B
    }%

    \inferrule[\DCKBox/NE/$\AndPr$\nbhyp{}Elim\nbhyp{}1]{%
      \DualCtx{\Delta}{\Gamma} \vdashNe A \AndPr B
    }{%
      \DualCtx{\Delta}{\Gamma} \vdashNe A
    }%

    \inferrule[\DCKBox/NE/$\AndPr$\nbhyp{}Elim\nbhyp{}2]{%
      \DualCtx{\Delta}{\Gamma} \vdashNe A \AndPr B
    }{%
      \DualCtx{\Delta}{\Gamma} \vdashNe B
    }

    \inferrule[\DCKBox/NF/$\ImpPr$\nbhyp{}Intro]{
      \DualCtx{\Delta}{\ExtCtx{\Gamma}{A}} \vdashNf B
    }{%
      \DualCtx{\Delta}{\Gamma} \vdashNf A \ImpPr B
    }%

    \inferrule[\DCKBox/NE/$\ImpPr$\nbhyp{}Elim]{%
      \DualCtx{\Delta}{\Gamma} \vdashNe A \ImpPr B\\
      \DualCtx{\Delta}{\Gamma} \vdashNf A
    }{%
      \DualCtx{\Delta}{\Gamma} \vdashNe B
    }%

    \inferrule[\DCKBox/NF/$\OrPr$\nbhyp{}Intro\nbhyp{}1]{
      \DualCtx{\Delta}{\Gamma} \vdashNf A
    }{%
      \DualCtx{\Delta}{\Gamma} \vdashNf A \OrPr B
    }%

    \inferrule[\DCKBox/NF/$\OrPr$\nbhyp{}Intro\nbhyp{}2]{
      \DualCtx{\Delta}{\Gamma} \vdashNf B
    }{%
      \DualCtx{\Delta}{\Gamma} \vdashNf A \OrPr B
    }%

    \inferrule[\DCKBox/NF/$\OrPr$\nbhyp{}Elim]{%
      \DualCtx{\Delta}{\Gamma} \vdashNe A \OrPr B\\
      \DualCtx{\Delta}{\ExtCtx{\Gamma}{A}} \vdashNf C\\
      \DualCtx{\Delta}{\ExtCtx{\Gamma}{B}} \vdashNf C
    }{%
      \Gamma \vdashNf C
    }

    \inferrule[\DCKBox/NF/$\BoxPr$\nbhyp{}Intro]{%
      \DualCtx{\Delta}{\EmptyCtx} \vdashNf A
    }{%
      \DualCtx{\Delta}{\Gamma} \vdashNf \BoxPr{A}
    }
    
    \inferrule[\DCKBox/NF/$\BoxPr$\nbhyp{}Elim]{%
      \DualCtx{\Delta}{\Gamma} \vdashNe \BoxPr{A} \\
       \DualCtx{\ExtCtx{\Delta}{A}}{\Gamma} \vdashNf \BoxPr{B}
    }{%
      \DualCtx{\Delta}{\Gamma} \vdashNf B
    }
  \end{mathpar}

\end{document}